\newtheoremstyle{theorem}{1em}{1em}{\slshape}{0pt}{\bfseries}{.}{ }{}
\theoremstyle{theorem}
\newtheorem{theorem}{Theorem}
\newtheorem{conjecture}{Conjecture}
\newtheorem*{theorem*}{Theorem}
\newtheorem{corollary}[theorem]{Corollary}
\newtheorem*{corollary*}{Corollary}
\newtheorem{proposition}[theorem]{Proposition}
\newtheorem{lemma}[theorem]{Lemma}
\newtheorem*{claim*}{Claim}
\theoremstyle{remark}
\newtheorem{remark}{Remark}
\newtheorem*{remark*}{Remark}
\providecommand{\setN}{\mathbb{N}}
\providecommand{\setR}{\mathbb{R}}
\newcommand{\E}{\mathop{\mathbb{E}}}
\newcommand{\eps}{\varepsilon}
\DeclareMathAlphabet{\pazocal}{OMS}{zplm}{m}{n}
\title{Optimal Online Discrepancy Minimization}
\author{Janardhan Kulkarni\thanks{Microsoft Research, Redmond. Email: {\tt jakul@microsoft.com}.} \; \and \; Victor Reis\thanks{University of Washington, Seattle. Email: {\tt voreis@uw.edu}.} \;\; and \; Thomas Rothvoss\thanks{University of Washington, Seattle. Email: {\tt rothvoss@uw.edu}. Supported by NSF CAREER grant 1651861, NSF grant 2318620 and a David \& Lucile Packard Foundation Fellowship.}}
\date{}
\begin{document}

\maketitle

\begin{abstract}
  We prove that there exists an online algorithm that for any sequence of vectors $v_1,\ldots,v_T \in \setR^n$
  with $\|v_i\|_2 \leq 1$, arriving one at a time, decides random signs $x_1,\ldots,x_T \in \{ -1,1\}$ so that
  for every $t \le T$, the prefix sum $\sum_{i=1}^t x_iv_i$ is $10$-subgaussian. This improves over the work of
  Alweiss, Liu and Sawhney who kept prefix sums $O(\sqrt{\log (nT)})$-subgaussian, and gives a $O(\sqrt{\log T})$ bound on the discrepancy $\max_{t \in T} \|\sum_{i=1}^t x_i v_i\|_\infty$. 
  Our proof combines a generalization of Banaszczyk's prefix balancing result to trees with a cloning argument to find distributions rather than single colorings. We also show a matching $\Omega(\sqrt{\log T})$ strategy for an oblivious adversary.
\end{abstract}

\section{Introduction}

We study \emph{online} vector balancing problems, first considered  by Spencer in late 70's \cite{DBLP:journals/jct/Spencer77}.
We receive vectors $v_1,\ldots,v_T \in \setR^n$, which are bounded in some norm, one at a time, and we have to decide the sign $x_i \in \{ -1,1\}$
{\em irrevocably} after learning the vector $v_i$. 
The goal is to keep the signed sum $\sum_{i=1}^T x_iv_i$ small in some norm; a natural variant asks that all prefixes $\sum_{i=1}^t x_i v_i$ are small for all $t \in [T]$.
This vector balancing formulation captures several classic problems in discrepancy theory, where the norm to be balanced is the maximum absolute value of any coordinate, also known as the $\ell_\infty$ norm.

If $\|v_i\|_{\infty} \leq 1$, uniformly random signs achieve a $O(\sqrt{T \log n})$ $\ell_\infty$ bound, and there are several methods
to make this a deterministic online algorithm, see for example the excellent book
of Chazelle~\cite{DiscrepancyMethod-Chazelle-2000}.
Unfortunately, the random coloring is tight in its dependency  on $\sqrt{T}$ for $n \ge 2$ \cite{DBLP:journals/jct/Spencer77, TenLectures}.
Indeed, an {\em adaptive adversary} can simply choose a vector $v_t \in [-1,1]^n$ that is orthogonal to the current position $\sum_{i=1}^{t-1} x_iv_i$ and satisfies $\|v_t\|_2^2 \ge n-1$, then $\|\sum_{i=1}^t x_i v_i\|_2 \ge \sqrt{(n-1)T}$ and, in particular, the discrepancy is $\Omega(\sqrt{T})$.

Much of the focus of subsequent efforts has been to improve the dependence on the $\sqrt{T}$ term by restricting the power of the adversary.
One natural choice is to consider the stochastic setting. 
Here vectors $v_i$ are sampled independently from a distribution $\bold{p}$ that is known to the online algorithm.
When $\bold{p}$ is a uniform distribution on all $\{-1, 1 \}^n$ vectors, Bansal and Spencer \cite{BansalSpencer2020} showed that one can get $O(\sqrt n)$ discrepancy for the $\ell_{\infty}$-norm, or $O(\sqrt{n} \log T)$ for all prefixes up to time $T$.
Motivated by the applications of online discrepancy minimization techniques to online envy minimization problems \cite{jiang2019online}, 
Bansal, Jiang, Singla and Sinha~\cite{DBLP:journals/corr/abs-1912-03350} considered general distributions $\bold{p}$ supported on $[-1, 1]^n$.
For this problem, they achieved an $\ell_{\infty}$-discrepancy of $O(n^2 \log (nT))$, and this was improved by Bansal, Jiang, Meka, Singla, and Sinha \cite{bansal2021online} who showed a bound of $O(\sqrt{n} \log^4(nT))$. The work of Aru, Narayanan, Scott and Venkatesan \cite{aru2018balancing} achieved a bound of $O_n(\sqrt{\log T})$, where the dependence on $n$ is super exponential. One important point to note in all of these results is that they substantially improve the dependence on $\Omega(\sqrt{T})$ to logarithmic factors.

Despite these impressive results, until very recently, little progress was made on the online vector balancing problem against {\em oblivious} adversaries  -- the most common setting considered in the online algorithms literature.  
Here the adversary fixes an arbitrary set of vectors  $v_1,\ldots,v_T \in \setR^n$ in advance, and the online algorithm can use randomized strategies. For the special case of \emph{edge orientation}, where the vectors correspond to columns of the incidence matrix of a graph, a simple random labeling argument due to Kalai~\cite{NAOR200593} achieves a discrepancy bound of $O(\log T)$ with high probability. 

In an elegant result, Alweiss, Liu and Sawhney~\cite{SelfBalancingRandomWalkSTOC2021} showed that a very simple
\emph{self-balancing random walk} can find random signs so that with high probability all prefixes $\sum_{i=1}^t x_iv_i$ are $O(\sqrt{\log(nT)})$-\emph{subgaussian}. Here a random vector $X \in \setR^n$ is called \emph{$c$-subgaussian} if
for any unit direction $w \in S^{n-1}$ one has $\E[\exp(\left<X,w\right>^2/c^2)] \le 2$.
In particular, $\|\sum_{i=1}^t x_iv_i\|_{\infty} \leq O(\log(nT))$ for all $t \in [T]$ with high probability against any oblivious adversary.

\subsection{Our contributions}

Our main contribution is:
\begin{theorem} \label{thm:MainResultI}
There is an online algorithm that against any oblivious adversary and for any sequence of vectors $v_1,\ldots,v_T \in \setR^n$ with $\|v_i\|_2 \leq 1$, arriving one at a time, decides random signs $x_1,\ldots,x_T \in \{ -1,1\}$ so that for every $t \in [T]$, the prefix sum $\sum_{i=1}^t x_iv_i$ is $10$-subgaussian.
\end{theorem}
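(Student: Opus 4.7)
The plan is to combine a tree-structured generalization of Banaszczyk's prefix-balancing theorem with a cloning argument that lifts an existence statement about signings to a distribution we can sample online. Intuitively, the $\sqrt{\log(nT)}$ loss in Alweiss--Liu--Sawhney comes from a union bound over $T$ prefixes; if instead one can build a ``planning tree'' whose root-to-leaf paths capture all relevant futures, and prove Banaszczyk's $O(1)$ bound holds simultaneously along \emph{every} path of the tree, then we can follow whichever path the adversary actually realizes and pay no logarithmic overhead.

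First I would establish a tree version of prefix Banaszczyk: given a rooted tree whose edges carry vectors of Euclidean norm at most $1$, there exist signs $x_e \in \{-1,+1\}$ on the edges such that for every node $u$ the path sum $\sum_{e \in P(u)} x_e v_e$ is $O(1)$-subgaussian, where $P(u)$ denotes the edges on the root-to-$u$ path. The path case is exactly the classical prefix-Banaszczyk theorem, typically proved by constructing a symmetric convex body $K$ of Gaussian measure at least $1/2$ and recursively ``routing'' partial sums inside $K$. I would adapt that argument inductively over the tree, allowing $K$ to contract by a carefully chosen factor at each branching so that the total multiplicative loss of Gaussian measure stays bounded even across exponentially many nodes.

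Second, I would use cloning to convert this existence result into an online randomized algorithm. At each time $t$ the algorithm maintains a multiset of $K$ clones, each carrying a candidate sign history. When $v_{t+1}$ arrives, each clone branches into two children corresponding to the two sign choices; viewing the entire history of clone splits as a single rooted tree with vectors on edges, the tree-Banaszczyk lemma yields signs on these branches so that every clone's prefix sum is $O(1)$-subgaussian at every time. Sampling a uniformly random clone at the end yields the desired distribution. The algorithm is genuinely online because the sign assignment at level $t$ depends only on $v_1,\ldots,v_t$.

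The main obstacle I anticipate is showing that the subgaussian constant in the tree-Banaszczyk statement does not degrade with the depth or the branching of the tree. Banaszczyk's original path proof is already tight in how much Gaussian measure it spends per step, and a naive tree induction would lose a factor at every level. The right invariant is likely a multiplicatively controlled Gaussian-measure budget, or a convex-geometric potential over the set of admissible signings that is robust to branching. Overcoming this and identifying the correct shrinkage schedule is the heart of the argument; once tree-Banaszczyk is in place with a constant subgaussian parameter, the cloning construction and its online implementation should follow as a comparatively clean corollary yielding the $10$-subgaussian guarantee claimed in Theorem~\ref{thm:MainResultI}.
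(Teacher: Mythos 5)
Your high-level roadmap---a tree generalization of Banaszczyk's prefix balancing followed by a cloning step to convert a single signing into a samplable distribution---matches the paper's architecture, but both of the places where you identify difficulty are resolved by the paper in ways that differ materially from what you propose, and your proposed fixes have genuine gaps.

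On tree-Banaszczyk: the paper's Theorem~\ref{thm:TreeVectorBalancing} shows that for a tree with $|E|$ edges and any convex $K$ with $\gamma_n(K) \ge 1 - \tfrac{1}{2|E|}$, a single signing exists putting every root-to-node path sum in $\alpha K$. The proof is a clean union bound over children via Banaszczyk's $*$-operation (Theorem~\ref{thm:BanaszczykStretchedBodyLB}), with no per-level contraction; it simply \emph{accepts} that the required measure of $K$ scales with $|E|$. Your plan to ``contract $K$ by a carefully chosen factor at each branching'' goes in the wrong direction: shrinking $K$ only lowers its Gaussian measure, and Banaszczyk's argument genuinely needs $\gamma_n$ of the working body to stay above $1/2$ throughout. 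For the discretized tree (with branching given by an $\eps$-net of $S^{n-1}$), $|E|$ is exponential in $nT$, so any \emph{fixed-dimensional} $K$ satisfying $\gamma_n(K)\ge 1 - \tfrac{1}{2|E|}$ must contain a Euclidean ball of radius $\Theta(\sqrt{\log|E|}) = \Theta(\sqrt{nT})$; this is precisely the $\sqrt{\log(nT)}$ obstruction you are trying to beat, and no shrinkage schedule for a fixed $K$ escapes it.

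On cloning: your version, where each clone branches into two on the sign choices, is not what the paper does and does not obviously produce a subgaussian distribution. The paper's cloning replaces each edge $e$ by $N$ \emph{orthogonal} copies $v_e^{(1)},\dots,v_e^{(N)}$ in $\setR^{Nn}$, yielding a tree $\pazocal{T}'$ with $|E'| = N|E|$. Crucially the target body $K\subseteq\setR^{Nn}$ is taken to be the set of $N$-block vectors whose uniformly random block is $(2+\delta)$-subgaussian (equation~\eqref{eq:DefK}), and Proposition~\ref{prop:GaussianMeasureOfK} shows $\gamma_{Nn}(K) \ge 1 - C_\delta^n / N^{1+\delta}$. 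This superlinear-in-$N$ improvement beats the linear blowup $|E'|=N|E|$, so one can choose $N$ large enough that $\gamma_{Nn}(K)\ge 1 - \tfrac{1}{2|E'|}$; the same cloning then gives a distribution by sampling $\ell\sim[N]$ uniformly, killing both issues simultaneously. The substance here is the $N^{-(1+\delta)}$ versus $N^{-1}$ race, proved via an $\eps$-net over directions plus Rosenthal's moment inequality to get polynomial tails for sums of the heavy-tailed variables $\exp(\langle w,y^{(\ell)}\rangle^2/C)$---a step your proposal does not anticipate and that cannot be replaced by a Chernoff bound. Without the orthogonal-cloning body and the $N^{1+\delta}$ measure estimate, the argument does not close.
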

The algorithm does not depend on $T$ so one may also take an infinite sequence of
vectors.  

By using the machinery of Talagrand's majorizing measures theorem, we may recover Banaszczyk's theorem in the online setting.

\begin{theorem} \label{thm:MainResultBanaszczyk}
 Given a symmetric convex body $K \subseteq \setR^n$, there is an online algorithm that against any oblivious adversary and for any sequence of vectors $v_1,\ldots,v_T \in \setR^n$ with $\|v_i\|_{2} \leq 1$, arriving one at a time, decides random signs $x_1,\ldots,x_T \in \{ -1,1\}$ so that each of the following hold with probability at least $1/2$:
\begin{enumerate*}
\item[(a)] $\sum_{i=1}^T x_iv_i \in O(1) \cdot K$ under the assumption $\gamma_n(K) \ge \frac{1}{2}$.
\item[(b)] $\sum_{i=1}^t x_iv_i \in O(1) \cdot K$ for all $t \in [T]$ under the assumption $\gamma_n(K) \ge 1 - \frac{1}{2T}$.
\end{enumerate*}
\end{theorem}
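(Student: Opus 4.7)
The plan is to combine Theorem~\ref{thm:MainResultI} with Talagrand's majorizing measures theorem and the generic chaining tail bound, in order to upgrade the subgaussian bound on the prefix sums $S_t := \sum_{i=1}^t x_i v_i$ to the geometric bound $S_t \in O(1) \cdot K$.

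Writing $\|\cdot\|_K$ for the Minkowski gauge of the symmetric convex body $K$ and $g \sim N(0, I_n)$ for a standard Gaussian, the $10$-subgaussianity of $S_t$ ensures that the process $\{\langle S_t, w\rangle\}_{w \in K^\circ}$ has canonical $\psi_2$ metric bounded by $10\|\cdot\|_2$ on $K^\circ$. Talagrand's majorizing measures theorem and its generic chaining corollary therefore yield both the expectation bound
$$ \E \|S_t\|_K \;=\; \E \sup_{w \in K^\circ} \langle S_t, w \rangle \;\lesssim\; \E \sup_{w \in K^\circ} \langle g, w \rangle \;=\; \E \|g\|_K $$
(where the last equality uses symmetry of $K$) and the corresponding tail bound
$$ \Pr\!\left[ \|S_t\|_K > C \left( \E \|g\|_K + u \cdot \mathrm{diam}_2(K^\circ) \right) \right] \;\leq\; 2 \exp(-u^2/C') $$
for universal constants $C, C' > 0$ and any $u > 0$.

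For part (a), the hypothesis $\gamma_n(K) \geq 1/2$ forces the median of $\|g\|_K$ to be at most $1$, and Borell's lemma then yields $\E \|g\|_K = O(1)$. Markov's inequality applied to the expectation bound gives $S_T \in O(1) \cdot K$ with probability at least $1/2$. For part (b), the same estimate still gives $\E \|g\|_K = O(1)$, but the stronger hypothesis $\gamma_n(K) \geq 1 - \tfrac{1}{2T}$ additionally controls the $\ell_2$-diameter of $K^\circ$: choosing $w_0 \in K^\circ$ of maximal $\ell_2$-norm, the containment $K \subseteq \{v : |\langle v, w_0\rangle| \leq 1\}$ implies $\Pr[|Z| \leq 1/\|w_0\|_2] \geq 1 - \tfrac{1}{2T}$ for $Z \sim N(0,1)$, which forces $\|w_0\|_2 = O(1/\sqrt{\log T})$ and hence $\mathrm{diam}_2(K^\circ) = O(1/\sqrt{\log T})$. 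Plugging $u = \Theta(\sqrt{\log T})$ into the tail bound makes both contributions inside $C(\cdots)$ simultaneously $O(1)$ while driving the failure probability to $1/(2T)$ per prefix, and a union bound over $t \in [T]$ concludes.

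The main obstacle I expect is recognizing and exploiting the shrinking of $\mathrm{diam}_2(K^\circ)$ in part (b); without it, the chaining tail bound only delivers $\|S_t\|_K = O(\sqrt{\log T})$ after a union bound, losing exactly the $\sqrt{\log T}$ factor that the strong Gaussian-measure hypothesis is designed to save.
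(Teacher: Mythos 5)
Your proof is correct and takes essentially the same route as the paper's: both apply Theorem~\ref{thm:MainResultI} together with Talagrand's comparison inequality, use $\gamma_n(K)\ge 1/2$ to get $\E\|g\|_K\lesssim 1$, and for part (b) use $\gamma_n(K)\ge 1-\tfrac{1}{2T}$ to force $\mathrm{diam}_2(K^\circ)\lesssim 1/\sqrt{\log T}$ (equivalently $\mathrm{inradius}(K)\gtrsim\sqrt{\log T}$), which is exactly what allows the union bound over $T$ prefixes to go through at constant width. The only cosmetic differences are that you invoke Markov for part (a) where the paper applies the same tail bound with $\delta=1/2$, and you derive $\E\|g\|_K=O(1)$ from Borell's lemma while the paper cites the corresponding lemmas in Dadush et al.
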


For $\ell_p$ discrepancy minimization, we obtain the following corollary:

\begin{corollary}\label{thm:MainResultLp}
There is an online algorithm that against any oblivious adversary and for any sequence of vectors $v_1,\ldots,v_T \in \setR^n$ with $\|v_i\|_{2} \leq 1$, arriving one at a time, decides random signs $x_1,\ldots,x_T \in \{ -1,1\}$ so that each of the following hold with probability at least $1- \delta$ for any $\delta \in (0,\frac{1}{2}]$ and any $p \ge 2$:
\begin{enumerate*}
\item[(a)] $\|\sum_{i=1}^T x_iv_i\|_p \lesssim \sqrt{p} \min(n,T)^{1/p} + \sqrt{\log(1/\delta)}$;
\item[(b)] $\max_{t \in [T]} \|\sum_{i=1}^t x_iv_i\|_p \lesssim \sqrt{p} \min(n,T)^{1/p} + \sqrt{\log T} + \sqrt{\log(1/\delta)}$.
\end{enumerate*}
Furthermore,
\begin{enumerate*}
\item[(c)] $\|\sum_{i=1}^T x_iv_i\|_\infty \lesssim \sqrt{\log \min(n,T)} + \sqrt{\log(1/\delta)}$;
\item[(d)] $\max_{t \in [T]} \|\sum_{i=1}^t x_iv_i\|_\infty \lesssim \sqrt{\log T} + \sqrt{\log(1/\delta)}$.
\end{enumerate*}
\end{corollary}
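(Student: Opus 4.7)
The plan is to deduce the corollary from Theorem~\ref{thm:MainResultI}. For each $t \in [T]$, the prefix sum $X_t := \sum_{i=1}^t x_i v_i$ is $10$-subgaussian and lies in $V_t := \mathrm{span}(v_1, \ldots, v_t)$, a subspace of dimension $d_t \le \min(n, t)$. Fix $p \ge 2$ and set $q = p/(p-1) \in (1, 2]$. By $\ell_p / \ell_q$ duality, $\|X_t\|_p = \sup_{y \in B_q^n} \langle X_t, y \rangle$, and since $X_t \in V_t$ we may replace the index set by $T_t := P_{V_t} B_q^n \subseteq V_t \cap B_2^n$, which has $\ell_2$-diameter at most $2$. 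The process $(\langle X_t, y\rangle)_{y \in T_t}$ then has subgaussian increments with $\psi_2$-norm at most $10 \|y - y'\|_2$.

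\textbf{Tail bound.} I would invoke the tail form of Talagrand's generic chaining for subgaussian processes (Dirksen, as already used in the proof of Theorem~\ref{thm:MainResultBanaszczyk}):
\[\Pr\big[\|X_t\|_p > C_1 \gamma_2(T_t, \|\cdot\|_2) + C_2 u\big] \le 2 e^{-u^2}.\]
By Talagrand's majorizing-measure theorem, $\gamma_2(T_t, \|\cdot\|_2)$ is within constants of the Gaussian mean width $\E \|P_{V_t} G\|_p$, where $G$ is a standard Gaussian on $\setR^n$. A direct moment calculation bounds this by $\sqrt{p}\, d_t^{1/p}$: the $i$-th coordinate of $P_{V_t} G$ is Gaussian with variance $r_{t,i}^2 := \|P_{V_t} e_i\|_2^2 \le 1$, so $\E \|P_{V_t} G\|_p^p \lesssim p^{p/2} \sum_i r_{t,i}^p \le p^{p/2} \sum_i r_{t,i}^2 = p^{p/2} d_t$ (using $r_{t,i}^p \le r_{t,i}^2$ for $p \ge 2$). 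Setting $u = \sqrt{\log(2/\delta)}$ yields
\[\Pr\big[\|X_t\|_p > C_1' \sqrt{p}\, d_t^{1/p} + C_2' \sqrt{\log(1/\delta)}\big] \le \delta.\]

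\textbf{Conclusion.} Specializing $t = T$ and using $d_T \le \min(n, T)$ gives (a). Part (b) follows by applying the single-$t$ bound with failure probability $\delta / T$ and union bounding over $t \in [T]$; the inequality $\sqrt{\log(T/\delta)} \le \sqrt{\log T} + \sqrt{\log(1/\delta)}$ absorbs the extra term. Parts (c) and (d) follow from (a) and (b) respectively with $p = \log \min(n, T)$ and $p = \log T$: in both cases $\min(n, T)^{1/p} \le e$, so $\sqrt{p} \min(n,T)^{1/p} = O(\sqrt{\log\min(n,T)})$ respectively $O(\sqrt{\log T})$, and $\|\cdot\|_\infty \le \|\cdot\|_p$ finishes.

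\textbf{Main obstacle.} The key nontrivial input is Talagrand's tail bound for suprema of subgaussian processes, which is what produces the \emph{additive} $\sqrt{\log(1/\delta)}$ deviation at a rate controlled by the (constant) $\ell_2$-diameter of the $\ell_q$-ball, rather than the much larger $\gamma_2$. A pure $p$-th moment / Markov argument instead yields only a multiplicative $\big(\sqrt{p} + \sqrt{\log(1/\delta)}\big)\min(n,T)^{1/p}$ tail, which is strictly weaker than the stated bound. Fortunately the needed chaining machinery is already invoked in the paper for Theorem~\ref{thm:MainResultBanaszczyk}, so it is available here.
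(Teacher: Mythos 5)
Your proof is correct and follows essentially the same route as the paper: apply the Talagrand comparison tail bound (Lemma~\ref{lem:TalagrandComparison}) to the restriction of the $\ell_p$ unit ball to the span of the input vectors, estimate the resulting Gaussian mean width by $\sqrt{p}\,d^{1/p}$, and union-bound over prefixes for (b). The only cosmetic differences are that you bound the mean width by a direct coordinate-wise moment computation (using $r_i^p \le r_i^2$ for $p\ge 2$) where the paper cites a monotonicity result from~\cite{BGMN2005} plus Jensen, and you obtain (c),(d) by substituting $p = \log\min(n,T)$ and $p = \log T$ into (a),(b) rather than estimating the $\ell_\infty$ mean width separately; both are standard and equally valid.
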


Our bounds match the best known upper bounds in the offline setting where all the vectors are known in advance. We show Corollary~\ref{thm:MainResultLp}(d) is tight in the oblivious setting even when $n = 2$:

\begin{theorem} \label{thm:OnlineDiscLB}
For any $n \ge 2$, there is a strategy for an oblivious adversary that yields a sequence of unit vectors $v_1, \dots, v_T \in \setR^n$ so that for any online algorithm, with probability at least $1 - 2^{-T^{\Omega(1)}}$, one has $\max_{t \in [T]} \|\sum_{i=1}^t x_iv_i\|_\infty \gtrsim \sqrt{\log T}$.
\end{theorem}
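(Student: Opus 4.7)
My approach is to reduce to the case $n=2$, which is immediate by embedding the vectors in the first two coordinates of $\setR^n$. For the oblivious adversary in $\setR^2$, the plan is to sample $\theta_1, \ldots, \theta_T$ i.i.d.\ uniform on $[0, 2\pi]$ and commit to $v_t := (\cos \theta_t, \sin \theta_t)\in S^1$; the high-probability statement is then over this randomness, and since $\|\cdot\|_\infty \ge \|\cdot\|_2/\sqrt{2}$ in $\setR^2$ it suffices to lower bound $\max_{t\in[T]} \|S_t\|_2$.

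The heart of the argument is a \emph{drift lemma}. For any online algorithm, writing $S_t = \sum_{i\le t} x_i v_i$, the identity $\|S_t\|_2^2 = \|S_{t-1}\|_2^2 + 2 x_t \langle S_{t-1}, v_t\rangle + 1$ together with $x_t\in\{-1,1\}$ yields the pointwise lower bound $\|S_t\|_2^2 - \|S_{t-1}\|_2^2 \ge 1 - 2|\langle S_{t-1}, v_t\rangle|$, independently of how the algorithm chose $x_t$. Integrating over the uniform angle of $v_t$ gives
\[
\E\!\left[\|S_t\|_2^2 - \|S_{t-1}\|_2^2 \,\big|\, \mathcal{F}_{t-1}\right]\ \ge\ 1 - \frac{4}{\pi}\|S_{t-1}\|_2,
\]
together with an $\Omega(1)$ lower bound on the conditional variance of the squared-norm increment whenever $\|S_{t-1}\|_2$ is of constant order. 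So $(\|S_t\|_2^2)$ behaves like a mean-reverting random walk whose equilibrium is a constant and whose per-step variance is also $\Theta(1)$, regardless of the algorithm.

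To turn this into the super-polynomial failure probability $2^{-T^{\Omega(1)}}$, I would partition $[1,T]$ into $T/B$ disjoint blocks of length $B = \sqrt{T}$ (any sufficiently large polynomial length works) and prove a \emph{block-level excursion lemma}: uniformly in the starting state at the beginning of the block, the event $\max\{\|S_t\|_2 : t \in \text{block}\} \ge c\sqrt{\log T}$ has probability at least some constant $p > 0$. Since the $v_t$'s across blocks are independent, applying the tower property gives that the probability of every block failing is at most $(1-p)^{T/B} = 2^{-T^{\Omega(1)}}$, which is the bound claimed.

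The main obstacle is the block-level excursion lemma, which requires showing that no algorithmic choice of signs $x_t$ can simultaneously cancel the drift and the variance of the squared-norm process enough to keep it confined to a disk of radius $c\sqrt{\log T}$ for all $B$ steps. The intuition is information-theoretic: a single bit $x_t$ per step cannot compensate for the two real-valued degrees of freedom encoded in $v_t \in S^1$, so some excess growth must accumulate. Formally, I expect this to be captured by coupling $(\|S_t\|_2^2)$ with an Ornstein--Uhlenbeck-type diffusion having equilibrium at a constant and unit-order noise, whose classical extreme-value statistics give $P(\max_{[0,T]} \ge c\sqrt{\log T}) \ge 1 - 2^{-T^{\Omega(1)}}$ and transfer back to the discrete setting via Azuma-type concentration on the martingale part of the squared norm.
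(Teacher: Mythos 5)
Your approach is genuinely different from the paper's, but it contains both an acknowledged gap and, more seriously, a likely fundamental limitation that would prevent it from achieving the claimed $\sqrt{\log T}$ bound.

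The paper's adversary does not use random directions at all. Instead, within each block of length $k = \Theta(\log T)$, it samples a \emph{hidden} sign vector $y \in \{-1,1\}^k$ and chooses $v_i$ orthogonal to $\sum_{j<i} y_j v_j$. The algorithm's sign $x_i$ is determined before $y_i$ is revealed, so $\Pr[x=y] = 2^{-k}$ exactly; when this happens the block increment has $\ell_2$ norm exactly $\sqrt{k}$, and the whole bound follows from a direct union bound over $T/k$ independent blocks. No drift analysis, no excursion lemma, no coupling to a diffusion. Your argument, in contrast, would need the ``block-level excursion lemma'' which you correctly flag as ``the main obstacle'' but leave unproved; filling it in would require a nontrivial anticoncentration bound for a state-dependent Markov chain against an adversarial (i.e.\ arbitrary sign-choosing) controller, and coupling to an Ornstein--Uhlenbeck diffusion is not a routine step here since the radial process has increments whose magnitude and sign-structure depend on $\|S_{t-1}\|_2$.

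There is also a more concrete concern suggesting the random-direction adversary is inherently too weak. Against i.i.d.\ uniform directions, the always-minimize algorithm (which is the pointwise minimizer of $\|S_t\|_2^2$ at each step, matching exactly your lower-bound inequality) produces a process with increment $\|S_t\|_2^2 - \|S_{t-1}\|_2^2 = 1 - 2\|S_{t-1}\|_2 |\cos\theta_t|$. When $\|S_{t-1}\|_2 = r$ is large, a positive increment requires $|\cos\theta_t| < \tfrac{1}{2r}$, which has probability $\Theta(1/r)$, and even then the increment is at most $1$. A standard large-deviations heuristic then gives that a climb from level $r^2 \approx L/2$ to level $r^2 \approx L$ costs roughly $\exp(-\Theta(L \log L))$ in probability, so over $T$ steps one expects $\max_t \|S_t\|_2 \approx \sqrt{\log T / \log\log T}$, not $\sqrt{\log T}$. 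This matches the earlier lower bound of~\cite{DBLP:journals/corr/abs-1912-03350} and suggests your construction cannot reach the theorem's statement: the uniform random direction reveals too much information to the online algorithm before it must commit a sign. The crucial structural idea in the paper is that the adversary's vector $v_i$ is chosen orthogonal to a \emph{hypothetical} committed path $\sum_{j<i} y_j v_j$, not to the algorithm's actual partial sum, so the algorithm only learns $y_{i-1}$ after committing $x_{i-1}$. Finally, note also that ``always minimize the squared norm greedily'' is not a pathwise dominating strategy (one can construct states $r > \tilde r$ and an angle for which the minimizing next state from $r$ is smaller than from $\tilde r$), so your intended coupling does not hold without further argument.
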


This improves upon the $\Omega\Big(\sqrt{\tfrac{\log T}{\log \log T}}\Big)$ lower bound of~\cite{DBLP:journals/corr/abs-1912-03350}. For the $\ell_2$ norm, Corollary~\ref{thm:MainResultLp}(a) is tight for any $n, T$, since an orthonormal basis followed by zeros does achieve a lower bound of $\sqrt{\min(n,T)}$. If the orthonormal basis is instead followed by the construction in~\ref{thm:OnlineDiscLB}, we get a matching lower bound construction for Corollary~\ref{thm:MainResultLp}(b). On the other hand, it is not known whether there exists a lower bound better than constant for Corollary~\ref{thm:MainResultLp}(c).

Finally, our framework also provides improved bounds for the edge orientation problem.

\begin{corollary} [Online edge orientation] \label{cor:edgeOrientation}
There exists an online algorithm that for any set of $n$ vertices and any sequence of edges, arriving one at a time, decides orientations so that at every vertex, the absolute difference between indegree and outdegree always remains bounded by $O(\sqrt{\log T})$ with high probability.
\end{corollary}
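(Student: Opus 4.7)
The plan is to reduce the edge orientation problem directly to online vector balancing and invoke Corollary~\ref{thm:MainResultLp}(d). Label the $n$ vertices $1, \ldots, n$ and let $\chi_u \in \setR^n$ denote the $u$-th standard basis vector. When the $i$-th edge $\{a_i, b_i\}$ arrives, I feed the unit vector
\[
  w_i \;:=\; \tfrac{1}{\sqrt{2}}\bigl(\chi_{a_i} - \chi_{b_i}\bigr), \qquad \|w_i\|_2 = 1,
\]
into the online algorithm from Theorem~\ref{thm:MainResultI}, receive a random sign $x_i \in \{-1,+1\}$, and orient the edge from $a_i$ to $b_i$ if $x_i = +1$ and from $b_i$ to $a_i$ otherwise.

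The key observation is that for any vertex $u$ and any time $t$, the $u$-th coordinate of the prefix sum $\sum_{i \le t} x_i w_i$ equals $\tfrac{1}{\sqrt{2}}\bigl(\mathrm{outdeg}_t(u) - \mathrm{indeg}_t(u)\bigr)$, since each already-processed edge oriented away from $u$ contributes $+1/\sqrt{2}$ and each one oriented toward $u$ contributes $-1/\sqrt{2}$; edges not incident to $u$ contribute $0$. Consequently,
\[
  \max_{u} |\mathrm{indeg}_t(u) - \mathrm{outdeg}_t(u)| \;=\; \sqrt{2}\,\Bigl\|\sum_{i \le t} x_i w_i\Bigr\|_\infty.
\]
Applying Corollary~\ref{thm:MainResultLp}(d) with $\delta := T^{-c}$ for a suitably large constant $c$ then bounds the right-hand side by $O(\sqrt{\log T})$ uniformly in $t \le T$ with high probability, which is exactly the claimed bound on the indegree-minus-outdegree imbalance.

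There is no real obstacle to this reduction: the corollary is a direct specialization of Corollary~\ref{thm:MainResultLp}(d) to the edge-incidence encoding, and the only thing to verify is the straightforward dictionary between orientations of the pair $\{a_i,b_i\}$ and signs $x_i \in \{-1,+1\}$. All of the substantive work has been done in the prior prefix $\ell_\infty$ bound for general unit vectors.
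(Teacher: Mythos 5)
Your proof is correct and matches the paper's argument: encode each edge as a (normalized) signed incidence vector, identify signs with orientations, observe that the prefix sum's $u$-th coordinate is the scaled indegree/outdegree imbalance at $u$, and invoke Corollary~\ref{thm:MainResultLp}(d). The only cosmetic difference is that you explicitly normalize by $1/\sqrt{2}$ so the input vectors are exact unit vectors, whereas the paper simply notes the vectors have constant $\ell_2$ norm; this changes nothing substantive.
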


\subsection{An Overview of our Techniques}

Suppose we play against an oblivious adversary that has a predetermined sequence of vectors $v_1,\ldots,v_T \in \setR^n$
with $\|v_i\|_2 \leq 1$ for all $i \in [T]$ that are revealed to us one vector at a time and we need to determine
random signs $x_i \in \{ -1,1\}$. After some discretization one may
think of this game as a balancing problem on a rooted tree $\pazocal{T} = (V,E)$ where edges $e \in E$ are labelled with vectors $v_e$
of length at most one.
The adversary chooses a path from the root to a leaf which is revealed one edge at a time.
After learning the next edge on the path we must give it a random sign $x_e \in \{ -1,1\}$ to keep the sum $\sum_{e \in P} x_ev_e$
subgaussian where $P$ is the path chosen so far by the adversary. Here the tree $\pazocal{T}$ has depth $T$ but each
interior node will have an outgoing edge for each vector in a fine enough $\varepsilon$-net and so degrees in $\pazocal{T}$ are
exponential in $n$ and $T$.

The next observation is that rather than only drawing signs for the selected path we can
draw all signs $x \in \{ -1,1\}^E$ with the goal of keeping  $\sum_{e \in P} x_ev_e$  $O(1)$-subgaussian for \emph{all} root-node paths $P$. This is not actually a restriction since the adversary could pick any such path anyway.
Note that indeed this argument assumes an oblivious adversary.
In order to find the distribution we want to make use of the following powerful result:
\begin{theorem}[Banaszczyk~\cite{BalancingVectorsBanaszczyk1998}] \label{thm:BanaszczykStretchedBodyLB} 
There exists a constant\footnote{Banaszczyk's proof works as long as $\int_{-\beta}^\beta e^{-t^2/2} dt < \int_1^\infty e^{-t^2/2} dt$; for example take $\beta := 0.2001$.} $\beta > \frac{1}{5}$ so that for any convex body $K \subseteq \setR^n$ with $\gamma_n(K) \geq \frac{1}{2}$ and vector $u \in \setR^n$ with $\|u\|_2 \leq \beta$, there is a convex body $(K * u) \subseteq (K + u) \cup (K-u)$ with $\gamma_n(K * u) \geq \gamma_n(K)$.
\end{theorem}
Banaszczyk's construction for the body $K * u$ is quite intuitive: call a line $x + \setR u$ \emph{long} if its intersection with $K$ has length at least $2\|u\|_2$. Then intersect $(K + u) \cup (K-u)$ with all long lines. However, the proof of the inequality $\gamma_n(K * u) \geq \gamma_n(K)$ is quite skillful. 
The main application in Banaszczyk~\cite{BalancingVectorsBanaszczyk1998} was to prove that
for any vectors $v_1,\ldots,v_T \in \setR^n$ with $\|v_i\|_2 \leq 1$ and any convex body
$K \subseteq \setR^n$ with $\gamma_n(K) \geq 1/2$, there are signs $x \in \{ -1,1\}^T$ with $\sum_{i=1}^T x_iv_i \in 5K$.
In a later work, Banaszczyk applied his Theorem~\ref{thm:BanaszczykStretchedBodyLB} also to the prefix setting:

\begin{theorem}[Banaszczyk~\cite{SeriesOfSignedVectorsBanaszczyk2012}] \label{thm:BanaszczykPrefixBalancing}
  There is a constant $\alpha < 5$, 
  so that for any $v_1,\ldots,v_T \in \setR^n$ with $\|v_i\|_2 \leq 1$ for $i=1,\ldots,T$ and any convex body $K \subseteq \setR^n$ with $\gamma_n(K) \geq 1-\frac{1}{2T}$, there are signs $x_1,\ldots,x_T \in \{ -1,1\}$ so that
  \[
   \sum_{i=1}^t x_i v_i \in \alpha K \quad \forall t=1,\ldots,T.
  \]
\end{theorem}
One can think of this result as balancing all root-node paths on a path graph into a body $K$.
It is also interesting to note that the construction behind Theorem~\ref{thm:BanaszczykPrefixBalancing} is inherently non-online --- the sign for $v_i$ may depend on all other vectors $v_1,\ldots,v_{i-1}$ and $v_{i+1},\ldots,v_{T}$.  
Our first step is to generalize Theorem~\ref{thm:BanaszczykPrefixBalancing} to trees $\pazocal{T} = (V,E)$ and prove that for any convex body $K$ with   $\gamma_n(K) \geq 1-\frac{1}{2|E|}$ we can find
signs $x \in \{ -1,1\}^E$ so that $\sum_{e \in P} x_ev_e \in \alpha K$ for any root-node path $P$.
However, there are two problems with that approach:
\begin{enumerate}
\item[(i)] the tree $\pazocal{T}$ obtained in the reduction
  has exponential size and $K$ would need to be huge to satisfy   $\gamma_n(K) \geq 1-\frac{1}{2|E|}$;
\item[(ii)] a straightforward application of
  Banaszczyk's framework will only produce a single vector of signs $x \in \{ -1,1\}^E$ rather than a distribution.
\end{enumerate}
Interestingly, both problems can be solved with the same \emph{cloning trick}: we replace each edge $e$ with a sequence of $N$ edges labelled with orthogonal copies of $v_e$, say $v_e^{(1)},\ldots,v_e^{(N)}$. 
Then the signs $x_e^{(\ell)} \in \{ -1,1\}$ for the blown-up tree can be turned into a distribution over signs $(x_e^{(\ell)})_{e \in E}$ for the original tree by drawing $\ell \sim \{ 1,\ldots,N\}$ uniformly. This solves issue (ii). Considering (i), our trick makes the tree even larger by a factor $N$, worsening the issue! However, choosing
\[
 K = \big\{ (y^{(\ell)}_e)_{e \in E, \ell \in [N]} \mid (y^{\ell}_e)_{e \in E}\textrm{ is }O(1)\textrm{-subgaussian when }\ell \sim [N] \big\}
\]
we can prove that the Gaussian measure of $K$ increases fast enough as $N$ grows to compensate
for the blowup of the tree.

\subsection{A Brief History of Algorithmic Discrepancy}
\label{sec:briefhistory}
\emph{Discrepancy theory} is a classical area of combinatorics where one is given vectors $v_1,\ldots,v_T \in \setR^n$ that are bounded
in some norm and the goal is to determine signs $x_1,\ldots,x_T \in \{ -1,1\}$ so that
the signed sum $\sum_{i=1}^T x_iv_i$ is again bounded. For example, the celebrated
\emph{Spencer's Theorem}~\cite{SixStandardDeviationsSuffice-Spencer1985} says that for $\|v_i\|_{\infty} \leq 1$, there are signs $x \in \{ -1,1\}^T$ so that
$\|\sum_{i=1}^T x_iv_i\|_{\infty} \leq O(\sqrt{T \log(2n/T)})$ when $n \geq T$ and 
$\|\sum_{i=1}^T x_iv_i\|_{\infty} \leq O(\sqrt{n})$ for $T \geq n$. 
Often this result is phrased as bi-coloring elements in a set system in which case
$(v_1,\ldots,v_T)$ is the incidence matrix of the set system and $v_i$ is the incidence
vector of element $i \in [T]$. Beck and Fiala~\cite{BECK19811}
proved that for any vectors $v_1,\ldots,v_T \in \setR^n$ with $\|v_i\|_1 \leq 1$ one can find signs $x \in \{ -1,1\}^T$ so that $\|\sum_{i=1}^T x_iv_i\|_{\infty} \leq 2$. Another seminal result that
will be important for our consideration is the one by Banaszczyk~\cite{BalancingVectorsBanaszczyk1998} who proved that for any
$v_1,\ldots,v_T \in \setR^n$ with $\|v_i\|_{2} \leq 1$ and any convex body $K \subseteq \setR^n$ with $\gamma_n(K) \geq 1/2$ there are signs with $\sum_{i=1}^T x_iv_i \in 5K$.

Several of the original arguments in discrepancy theory were nonconstructive in nature, in particular the pigeonhole principle used
by Spencer~\cite{SixStandardDeviationsSuffice-Spencer1985} and the convex geometric argument
of Banaszczyk~\cite{BalancingVectorsBanaszczyk1998}. Starting with the breakthrough of Bansal~\cite{DiscrepancyMinimization-Bansal-FOCS2010}, there has been a long sequence of work on algorithmic discrepancy~\cite{DiscrepancyMinimization-LovettMekaFOCS12,DBLP:journals/rsa/EldanS18,Levy2016DeterministicDM,DBLP:journals/rsa/ReisR23}. Bansal, Dadush, Garg and Lovett \cite{GramSchmidtWalk-BansalDGL-STOC18} finally found a polynomial time algorithm that makes the result of Banaszczyk \cite{BalancingVectorsBanaszczyk1998} constructive. To be more precise, \cite{GramSchmidtWalk-BansalDGL-STOC18} show that for any vectors
$v_1,\ldots,v_T \in \setR^n$ with $\|v_i\|_2 \leq 1$,  
there is an efficiently computable distribution over signs $x \in \{ -1,1\}^T$ so that
the random vector $\sum_{i=1}^T x_iv_i$ is $O(1)$-subgaussian, which by \cite{DBLP:conf/approx/DadushGLN16} is equivalent to the result of Banaszczyk.

Much of the relevance of discrepancy theory to theoretical computer science comes from
the close connection to rounding solutions to linear programs which is a key technique
in approximation algorithms. For example, Rothvoss and Hoberg~\cite{DBLP:conf/focs/Rothvoss13,DBLP:conf/soda/HobergR17} used discrepancy theory to approximate bin packing within an additive $O(\log OPT)$ and Bansal, Charikar, Krishnaswamy and Li \cite{DBLP:conf/soda/BansalCKL14} made use of it for an improved LP rounding for broadcast scheduling. More recently Bansal, Rohwedder and Svensson~\cite{FlowTimeAndPrefixBeckFialaSTOC2022} used the result of Banaszczyk~\cite{BalancingVectorsBanaszczyk1998,SeriesOfSignedVectorsBanaszczyk2012} to obtain better
bounds on the integrality gap of LPs for scheduling with flow-time objectives. Another application
of discrepancy lies in differential privacy, see for example the seminal paper by Nikolov, Talwar and Zhang~\cite{DBLP:conf/stoc/NikolovTZ13}.

A natural variant in discrepancy theory is the one of \emph{prefix discrepancy}, where
given vectors $v_1,\ldots,v_T$, the goal is to keep the prefix $\sum_{i=1}^t x_iv_i$ bounded for all $t \in [T]$. This is crucial, for example, in the scheduling application of Bansal et al~\cite{FlowTimeAndPrefixBeckFialaSTOC2022}. Some arguments such as the one of \cite{SixStandardDeviationsSuffice-Spencer1985} and the one of Banaszczyk~\cite{BalancingVectorsBanaszczyk1998,SeriesOfSignedVectorsBanaszczyk2012}
do extend to this setting, whereas it is open whether the result of Beck and Fiala~\cite{BECK19811} has a prefix analogue. 
Broadly speaking, online discrepancy minimization problems, in some sense, include the prefix discrepancy question because the adversary could stop at any moment.

\section{Preliminaries}

First, we review a few facts for later.

\subsection{Geometry}

A \emph{convex body} is a set $K \subseteq \setR^n$ that is convex, compact (closed and bounded) and full-dimensional.
Let $B_2^n := \{ x \in \setR^{n} \mid \|x\|_2 \leq 1\}$ be the \emph{Euclidean ball}
and let $S^{n-1} := \{ x \in \setR^n  \mid \|x\|_2 = 1\}$ be the \emph{sphere}.
A set $W \subseteq S^{n-1}$ is called an \emph{$\varepsilon$-net} if for all $x \in S^{n-1}$, there is a $y \in W$
with ${\|x-y\|_2 \leq \varepsilon}$. 
\begin{lemma} \label{lem:SizeOfEpsilonNet}
For any $0<\varepsilon \leq 1$, there is an $\varepsilon$-net $W \subseteq S^{n-1}$ of size $|W| \leq (\frac{3}{\varepsilon})^n$.
\end{lemma}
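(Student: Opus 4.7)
The plan is to run the standard volumetric packing argument. First I would let $W \subseteq S^{n-1}$ be a \emph{maximal} $\varepsilon$-separated set, i.e.\ a maximal subset with the property that $\|x-y\|_2 > \varepsilon$ for every two distinct $x,y \in W$. Such a $W$ exists by compactness of $S^{n-1}$, and by maximality every point of $S^{n-1}$ must lie within distance $\varepsilon$ of some element of $W$, so $W$ is automatically an $\varepsilon$-net. The only remaining task is to bound $|W|$.

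Next I would apply a volume comparison in $\setR^n$. Consider the open balls $B(w,\varepsilon/2)$ for $w \in W$: by the separation property these balls are pairwise disjoint. Moreover, since each $w$ has $\|w\|_2 = 1$, the triangle inequality gives $B(w,\varepsilon/2) \subseteq (1+\varepsilon/2)\, B_2^n$. Summing volumes yields
\[
 |W| \cdot \Vol\!\big((\varepsilon/2)\, B_2^n\big) \;\leq\; \Vol\!\big((1+\varepsilon/2)\, B_2^n\big),
\]
and since $\Vol(r B_2^n) = r^n \Vol(B_2^n)$, this simplifies to
\[
 |W| \;\leq\; \Big(\frac{1+\varepsilon/2}{\varepsilon/2}\Big)^n \;=\; \Big(\frac{2}{\varepsilon} + 1\Big)^n.
\]

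Finally I would use the hypothesis $\varepsilon \leq 1$ to conclude $\tfrac{2}{\varepsilon}+1 \leq \tfrac{3}{\varepsilon}$, giving $|W| \leq (3/\varepsilon)^n$ as required. There is no real obstacle here; the only thing to be slightly careful about is invoking maximality (rather than trying to construct $W$ greedily and worrying about convergence), and remembering that the covering radius of a maximal packing is at most the packing radius, which is why a maximal $\varepsilon$-separated set is an $\varepsilon$-net.
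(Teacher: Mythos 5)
Your proof is correct and matches the paper's argument essentially verbatim: both take a maximal $\varepsilon$-separated subset of $S^{n-1}$, observe it is an $\varepsilon$-net by maximality, and bound its size via the standard volume comparison of disjoint $(\varepsilon/2)$-balls inside $(1+\varepsilon/2)B_2^n$. No differences worth noting.
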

\begin{proof}
Pick any maximal set of points $W \subseteq S^{n-1}$ that have $\|\cdot\|_2$-distance at least $\varepsilon$ to each other. Then $W$ is an $\varepsilon$-net. Moreover the balls $x + \frac{\varepsilon}{2} B^n_2$ are disjoint for $x \in W$ and contained in $(1 + \frac{\varepsilon}{2})B^n_2$. Hence 
\[|W| \le \frac{\mathrm{Vol}_n ((1 + \frac{\varepsilon}{2}) \cdot B^n_2)}{\mathrm{Vol}_n (\frac{\varepsilon}{2} \cdot B^n_2)} = \left(\frac{1 + \frac{\varepsilon}{2}}{\frac{\varepsilon}{2}}\right)^n \le \left(\frac{3}{\varepsilon}\right)^n. \qedhere\]
\end{proof}
\begin{lemma} \label{lem:NetImpliesSmallConicComb}
Let $W \subseteq S^{n-1}$ be an $\varepsilon$-net for $0<\varepsilon<1$. Then for any $w_0 \in S^{n-1}$ there is
a coefficient vector $\lambda \in \setR_{\geq 0}^W$ with $w_0 = \sum_{w \in W} \lambda_w w$ and $\sum_{w \in W} \lambda_w \leq \frac{1}{1-\varepsilon}$.
\end{lemma}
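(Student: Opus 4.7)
The plan is a greedy/iterative approximation: at each step, replace the current unit vector by the nearest net point and recurse on the renormalized residual, which is strictly shorter by a factor of at most $\varepsilon$.

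Formally, set $u_0 := w_0$ and generate sequences inductively as follows. Given a unit vector $u_k \in S^{n-1}$, pick some $w^{(k)} \in W$ with $\|u_k - w^{(k)}\|_2 \leq \varepsilon$, which exists because $W$ is an $\varepsilon$-net. Let $\rho_k := \|u_k - w^{(k)}\|_2 \in [0,\varepsilon]$. If $\rho_k = 0$ we terminate; otherwise set $u_{k+1} := (u_k - w^{(k)})/\rho_k \in S^{n-1}$, so that
\[
u_k = w^{(k)} + \rho_k \, u_{k+1}.
\]
Unrolling this recursion gives, for every $K \geq 1$,
\[
w_0 \;=\; \sum_{k=0}^{K-1} \Big(\prod_{j<k} \rho_j\Big)\, w^{(k)} \;+\; \Big(\prod_{j<K} \rho_j\Big)\, u_K.
\]

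The next step is to take $K \to \infty$. Since each $\rho_j \leq \varepsilon < 1$, the remainder has norm at most $\varepsilon^K \to 0$, so the series converges to $w_0$. Define
\[
\lambda_w \;:=\; \sum_{k \geq 0:\, w^{(k)} = w} \prod_{j<k} \rho_j \qquad (w \in W),
\]
which is nonnegative. Then $w_0 = \sum_{w \in W} \lambda_w w$ by rearrangement (the coefficients are absolutely summable), and
\[
\sum_{w \in W} \lambda_w \;=\; \sum_{k \geq 0} \prod_{j<k} \rho_j \;\leq\; \sum_{k \geq 0} \varepsilon^k \;=\; \frac{1}{1-\varepsilon}.
\]

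There is no real obstacle here; the only mildly delicate point is making sure the tail vanishes and the absolute convergence justifies regrouping by net element, both of which follow immediately from the geometric bound $\prod_{j<k}\rho_j \leq \varepsilon^k$. The case where the process terminates early (some $\rho_k = 0$) yields a finite exact combination with the same bound, so it is handled by the same accounting.
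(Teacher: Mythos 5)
Your proof is correct, and it takes a genuinely different route from the paper's. The paper defines $L(u)$ as the optimal value of an LP (the minimum $\ell_1$-norm of a nonnegative conic combination of $W$ equal to $u$), shows $L$ is finite via a Farkas-lemma argument, and then picks a maximizer $w^*$ of $L$ over $S^{n-1}$ and derives the self-bounding inequality $L(w^*) \le 1 + \varepsilon L(w^*)$ by decomposing $w^*$ into its nearest net point plus a residual. Your argument is instead an explicit iterative construction: repeatedly replace the current unit vector by its nearest net point, renormalize the residual to a new unit vector, and recurse; unrolling yields a geometric series with ratio at most $\varepsilon$, giving the desired coefficients directly. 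The tradeoff: the paper's proof is a slick one-line extremal bootstrapping, but it implicitly relies on the supremum of $L$ being attained (i.e.\ on continuity of $L$ over the compact sphere), a point it does not belabor; your iterative argument sidesteps that issue entirely, requires only absolute convergence of a geometric series, and is fully constructive, producing the coefficients by a greedy algorithm. Both proofs are fine; yours is arguably more self-contained, while the paper's is shorter once one grants the compactness claim.
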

\begin{proof}
  Consider $L(u) := \min\{ \|\lambda\|_1 : u = \sum_{w \in W} \lambda_w w\textrm{ and }\lambda_w \geq 0 \; \forall w \in W\}$.
  It is not entirely obvious, but $L(u)$ is well-defined for each $u \in S^{n-1}$. To see this, note that $L(u)$
  is the value of a linear program and if that program was infeasible then by Farkas Lemma (see e.g. Schrijver~\cite{TheoryOfLPandIPSchrijver1999})
  there is a $c \in S^{n-1}$ with $c^Tu > 0$ and $c^Tw < 0$ for all $w \in W$. Then $\|c-w\|_2 \geq 1$
  for all $w \in W$ which is a contradiction. 
  Now, fix a $w^* \in S^{n-1}$ maximizing the value $L(w^*)$; such a point must exist by compactness of $S^{n-1}$.
  Let $w \in W$ with $\|w^*-w\|_2 \leq \varepsilon$. Then writing $w^* = w + (w^*-w)$ we have
  \[
   L(w^*) \leq \underbrace{L(w)}_{\leq 1} + \underbrace{\|w^*-w\|_2}_{\leq \varepsilon} \cdot L\Big(\frac{w^*-w}{\|w^*-w\|_2}\Big) \leq 1 + \varepsilon L(w^*).
 \]
 Rearranging gives $L(w_0) \le L(w^*) \leq \frac{1}{1-\varepsilon}$.
\end{proof}

\subsection{Probability} 

We need to formalize what we mean that a random variable has Gaussian-type tails. There are several equivalent
ways to do so. We recommend the excellent textbook by Vershynin~\cite{HighDimProbabilityVershynin2018} for background.
\begin{proposition}[\cite{HighDimProbabilityVershynin2018}] \label{prop:EquivalentSubgaussianity}
  Let $X$ be a random variable. The following properties are equivalent; the parameters $C_i > 0$ appearing in these
  properties differ from each other by at most an absolute constant factor.
  \begin{enumerate}
  \item[(i)] One has $\Pr[|X| \geq t] \leq 2\exp(-t^2/C_1^2)$ for all $t \geq 0$.
  \item[(ii)] One has $\E[\exp(X^2 / C_2^2)] \leq 2$.
  \end{enumerate}
  If additionally $\E[X]=0$, then $(i)+(ii)$ are also equivalent to
  \begin{enumerate}
  \item[(iii)] One has $\exp(\lambda X) \leq \exp(C_3^2 \lambda^2)$ for all $\lambda \in \setR$.
  \end{enumerate}
\end{proposition}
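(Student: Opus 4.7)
The plan is to prove the equivalences as a cycle, keeping track only of the fact that the constants $C_1,C_2,C_3$ change by at most a universal multiplicative factor at each step. Since these are textbook characterizations of subgaussianity (as the authors emphasize, the reference is Vershynin), the argument is bookkeeping rather than new ideas. I would first establish (i) $\Leftrightarrow$ (ii) unconditionally, then (ii) $\Rightarrow$ (iii) $\Rightarrow$ (i) under the mean-zero hypothesis.

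For (i) $\Rightarrow$ (ii) the layer-cake formula gives
\[
\E[e^{X^2/C_2^2}] = 1 + \int_0^\infty \frac{2t}{C_2^2}\, e^{t^2/C_2^2}\, \Pr[|X| \geq t]\, dt,
\]
and plugging in (i) bounds the integrand by $(4t/C_2^2)\, e^{-t^2(1/C_1^2 - 1/C_2^2)}$; taking $C_2$ a sufficiently large multiple of $C_1$ (say $C_2 = 2C_1$) makes the integral at most $1$, yielding (ii). The converse (ii) $\Rightarrow$ (i) is one line of Markov's inequality: $\Pr[|X|\geq t] = \Pr[e^{X^2/C_2^2} \geq e^{t^2/C_2^2}] \leq 2\, e^{-t^2/C_2^2}$. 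For (iii) $\Rightarrow$ (i) with $\E[X]=0$ I would use the standard Chernoff argument: $\Pr[X \geq t] \leq e^{-\lambda t}\E[e^{\lambda X}] \leq e^{-\lambda t + C_3^2 \lambda^2}$, optimized at $\lambda = t/(2C_3^2)$ to give $e^{-t^2/(4C_3^2)}$, and applying the same to $-X$ (whose MGF bound from (iii) is symmetric in $\lambda$) followed by a union bound recovers (i).

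The substantive step is (ii) $\Rightarrow$ (iii). Comparing Taylor expansions, (ii) immediately yields the even-moment bound $\E[X^{2k}] \leq 2\, k!\, C_2^{2k}$ for every $k \geq 0$ (by extracting a single term of the series for $e^{X^2/C_2^2}$). I would then split by the size of $|\lambda| C_2$. For small $|\lambda|$, Taylor-expand $\E[e^{\lambda X}]$, use $\E[X]=0$ to kill the linear term, bound odd moments by Cauchy--Schwarz against adjacent even moments, and observe that the remaining series is dominated termwise by that of $1 + O(\lambda^2 C_2^2)$, hence at most $e^{O(\lambda^2 C_2^2)}$. For large $|\lambda|$, the pointwise AM--GM inequality $\lambda x \leq \lambda^2 C_2^2/4 + x^2/C_2^2$ gives $\E[e^{\lambda X}] \leq 2\, e^{\lambda^2 C_2^2/4}$, and the stray factor of $2$ is absorbed into the exponent using $|\lambda| C_2 \gtrsim 1$. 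Taking $C_3$ a sufficiently large multiple of $C_2$ then handles both regimes uniformly.

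The main obstacle is not conceptual but organizational: one has to manage the constants cleanly across the two regimes in (ii) $\Rightarrow$ (iii) so that the final $C_3$ depends only on $C_2$ and no $\lambda$-dependent slack is left behind. Every other step is either an explicit integral, Markov, or Chernoff.
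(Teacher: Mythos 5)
Your proposal is correct, but it is worth noting that the paper does not actually prove this proposition: it cites Vershynin's textbook for the full equivalence and restricts its own effort to the two directions it actually needs, namely (ii)~$\Rightarrow$~(i) and (ii)~$\Rightarrow$~(iii), which appear as a separate lemma immediately afterwards with \emph{explicit} numerical constants ($\Pr[|X|\geq t]\leq 2e^{-t^2}$ and $\E[e^{\lambda X}]\leq e^{0.4\lambda^2}$ when $\|X\|_{\psi_2}\leq 1$), so that the authors can record that $10$-subgaussianity implies the concrete MGF bound $\E[e^{\lambda X}]\leq e^{20\lambda^2}$.

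For the one genuinely nontrivial step, (ii)~$\Rightarrow$~(iii), your route differs from the paper's. You go through Taylor expansion of $\E[e^{\lambda X}]$, kill the linear term using $\E[X]=0$, control even moments termwise from the series of $e^{X^2/C_2^2}$, and handle odd moments by Cauchy--Schwarz between adjacent even ones, splitting into small-$|\lambda|$ and large-$|\lambda|$ regimes; this is the standard textbook chain and is fine, though "dominated termwise by $1+O(\lambda^2 C_2^2)$" should really be phrased as "the sum of all terms with $k\geq 2$ is $O(\lambda^2 C_2^2)$ in the small-$\lambda$ regime." The paper instead applies a pointwise inequality of the form $e^{z}\leq z+e^{0.56 z^{2}}$, takes expectations so the linear term vanishes, and then uses Jensen's inequality on $y\mapsto y^{0.56\lambda^{2}}$ (concave for $0.56\lambda^{2}<1$) to reduce directly to $\E[e^{X^2}]\leq 2$; in the large-$\lambda$ regime it uses a sharpened pointwise bound $e^{z}\leq z+\tfrac12\exp(0.4\lambda^{2}+z^{2}/\lambda^{2})$ rather than plain AM--GM plus "absorb the factor $2$." Your approach is more elementary and more modular (it carries absolute constants around); the paper's is less general but yields tighter explicit constants, which is what they want since they advertise a concrete constant $10$ in the main theorem. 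Both are valid; yours simply reconstructs the argument the paper defers to the reference for.
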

While Dadush et al~\cite{DBLP:conf/approx/DadushGLN16} and Bansal et al~\cite{GramSchmidtWalk-BansalDGL-STOC18} made use of (iii), for
our purpose, it will be most useful to work with (ii). Hence for a real-valued random variable $X$, we define the \emph{subgaussian norm} as
\[
\| X\|_{\psi_2} := \inf \ \big\{t > 0 : \E[\exp(X^2/t^2)] \le 2\big\}.
\]
It can be proven that $\| \cdot \|_{\psi_2}$ is a norm on the space of jointly distributed random variables; indeed it is the \emph{Orlicz norm} associated with the (convex and nondecreasing) function $z \mapsto e^{z^2} - 1$. 
In particular $\|X + Y\|_{\psi_2} \leq \|X\|_{\psi_2} + \|Y\|_{\psi_2}$ for any random variables $X,Y$ --- even if they are not independent.

For the sake of completeness and in order to be able to give explicit constants, we show the directions
$(ii) \Rightarrow (i)+(iii)$ in Proposition~\ref{prop:EquivalentSubgaussianity}.  For example, the following lemma shows that any 10-subgaussian random variable $X$ satisfies $\E[\exp(\lambda X)] \le \exp(20\lambda^2)$. Therefore Theorem~\ref{thm:MainResultI} also yields the subgaussianity bound given in the main result of~\cite{GramSchmidtWalk-BansalDGL-STOC18}.
\begin{lemma} 
 If $\|X\|_{\psi_2} \le 1$ then $\Pr [|X| \ge t] \le 2e^{-t^2}$ for all $t \geq 0$. If moreover $\E[X] = 0$, then also $\E[\exp(\lambda X)] \le \exp(0.4 \lambda^2)$ for all $\lambda \in \setR$.
\end{lemma}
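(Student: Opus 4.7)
For (a), the plan is a one-step application of Markov's inequality to the nonnegative random variable $e^{X^2}$: since $\{|X| \ge t\} = \{e^{X^2} \ge e^{t^2}\}$, one gets
\[
\Pr[|X| \ge t] \le e^{-t^2}\, \E[e^{X^2}] \le 2 e^{-t^2},
\]
where the second inequality is exactly the defining bound $\|X\|_{\psi_2} \le 1$.

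For (b), the core idea is to establish a pointwise inequality of the form $e^y \le y + e^{c y^2}$ valid for all $y \in \setR$, where $c$ is a fixed constant just above $1/2$. I would verify this analytically: $f(y) := y + e^{cy^2} - e^y$ satisfies $f(0) = f'(0) = 0$ and $f''(0) = 2c - 1 > 0$, making $y = 0$ a strict local minimum, and a routine inspection of $f'$ shows there are no other critical points where $f$ turns negative. Taking expectations and using $\E[X] = 0$ kills the linear term and yields
\[
\E[e^{\lambda X}] \le \E[e^{c\lambda^2 X^2}].
\]

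The last step is to bound the right-hand side by $e^{0.4 \lambda^2}$, split into two regimes. For $|\lambda|$ satisfying $c\lambda^2 \le 1$, Jensen's inequality applied to the concave function $z \mapsto z^{c\lambda^2}$ gives $\E[(e^{X^2})^{c\lambda^2}] \le (\E[e^{X^2}])^{c\lambda^2} \le 2^{c\lambda^2} = e^{c(\ln 2)\lambda^2}$, so choosing $c$ with $c \ln 2 \le 0.4$ (e.g., $c \in [\tfrac{1}{2}, 0.577]$) handles this regime. For larger $|\lambda|$, I would invoke Young's inequality $\lambda X \le \alpha \lambda^2 + X^2/(4\alpha)$ together with $\E[e^{X^2/(4\alpha)}] \le 2^{1/(4\alpha)}$ (valid by Jensen when $4\alpha \ge 1$), yielding $\E[e^{\lambda X}] \le 2^{1/(4\alpha)} e^{\alpha \lambda^2}$; picking $\alpha \approx 0.4$ makes the exponent match, and the constant prefactor is absorbed into $e^{0.4 \lambda^2}$ once $\lambda^2$ exceeds a small absolute threshold.

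The main obstacle is securing the explicit constant $0.4$. The naive inequality $e^y \le y + e^{y^2}$ (i.e., $c = 1$) only gives $e^{(\ln 2)\lambda^2} \approx e^{0.69 \lambda^2}$ in the small-$\lambda$ regime, which is too loose. Squeezing the constant down to $0.4$ requires the sharpened pointwise inequality with $c$ close to $0.577$, whose one-variable calculus verification is the delicate step, together with a careful calibration of $\alpha$ so that the small-$\lambda$ (Jensen) and large-$\lambda$ (Young) bounds meet without a gap around $|\lambda| \approx 1/\sqrt{c}$.
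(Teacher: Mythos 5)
Your first claim (the tail bound via Markov on $e^{X^2}$) is exactly the paper's argument, and your small-$\lambda$ treatment --- the pointwise inequality $e^y \le y + e^{cy^2}$ with $c$ slightly above $1/2$, followed by Jensen --- also matches the paper, which uses $c = 0.56$. The problem is the large-$\lambda$ regime, and there your Young-based bound genuinely cannot close the gap, which is the crux you flagged at the end.

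Concretely, your Young argument yields $\E[e^{\lambda X}] \le 2^{1/(4\alpha)} e^{\alpha\lambda^2}$, which is $\le e^{0.4\lambda^2}$ precisely when $\lambda^2 \ge \frac{\ln 2}{4\alpha(0.4-\alpha)}$. Jensen forces $\alpha \ge \tfrac{1}{4}$ (and for $\alpha < \tfrac{1}{4}$ the moment $\E[e^{X^2/(4\alpha)}]$ need not even be finite), so the best threshold over $\alpha \in [\tfrac14, 0.4)$ is attained at $\alpha = \tfrac14$, giving $\lambda^2 \ge \ln 2 / 0.15 \approx 4.62$. Meanwhile your small-$\lambda$ regime requires $c\lambda^2 \le 1$, and since the pointwise inequality needs $c > \tfrac12$, this covers at most $\lambda^2 < 2$ (with $c = 0.56$, only $\lambda^2 \le 1.79$). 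So the interval $\lambda^2 \in (2, 4.62)$ is uncovered, and no choice of $(c,\alpha)$ makes the two regimes meet --- picking ``$\alpha \approx 0.4$'' cannot work because the prefactor $2^{1/(4\alpha)} > 1$ then never gets absorbed.

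The paper closes the gap with a strictly sharper pointwise inequality than Young, namely $e^z \le z + \tfrac12 \exp\big(0.4\lambda^2 + z^2/\lambda^2\big)$ valid for all $z$ once $\lambda^2 \ge 1/0.56$. Two features are essential and both are missing from the Young bound: it keeps a linear term $z$ (killed in expectation by $\E[X]=0$), and it carries a prefactor $\tfrac12$ in front of the exponential. After substituting $z = \lambda X$ and taking expectations, that $\tfrac12$ exactly cancels the factor of $2$ coming from $\E[e^{X^2}] \le 2$, which is precisely the cushion needed to make the large-$\lambda$ threshold $\lambda^2 = 1/0.56$ coincide with the small-$\lambda$ boundary $c\lambda^2 = 1$. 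Without replacing Young by an inequality of this type, the argument does not go through.
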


\begin{proof}
  The claim that $\Pr [|X| \ge t] \le 2e^{-t^2}$ follows directly from Markov's inequality. For the second claim, if $0.56\lambda^2 < 1$ we can use the inequality $e^{z} \le z + e^{0.56 z^2}$ (valid for all $z \in \mathbb{R}$) to obtain
  \[\E[\exp(\lambda X)] \le \E[\exp(X^2)^{0.56\lambda^2}] \stackrel{\textrm{concavity}}{\le} \underbrace{\E[\exp(X^2)]}_{\le 2 \ \textrm{as} \ \|X\|_{\psi_2} \le 1}{}^{0.56\lambda^2} \le \exp(0.4 \lambda^2).  \]
Else, $\lambda^2 \ge \tfrac{1}{0.56}$ and we will show that $e^{z} \le z + \tfrac{1}{2} \exp(0.4 \lambda^2 + \frac{z^2}{\lambda^2})$ for all $z \in \mathbb{R}$ to obtain
\[\E[\exp(\lambda X)] \le \tfrac{1}{2} \exp(0.4 \lambda^2) \E[\exp(X^2)] \le \exp(0.4 \lambda^2).\]
For $\lambda^2 = \tfrac{1}{0.56}$, the claim follows from $e^z \le z + e^{0.56z^2}$; the only other possible minimizer $\lambda_*$ of the right side occurs when $0.4 \lambda_*^2 = \frac{z^2}{\lambda_*^2}$, so $\lambda_*^4 = z^2/0.4 \ge \tfrac{1}{0.56^2}$. And indeed, we do have the inequality $e^z \le z + \tfrac{1}{2} \exp(2 \sqrt{0.4} |z|)$ for all $|z| \ge \tfrac{\sqrt{0.4}}{0.56}$.
\end{proof}

Next, we extend the concept of a subgaussian norm from random variables to random vectors.
For a random vector $X$ taking values in $\setR^n$, we denote 
\[
\|X\|_{\psi_2,\infty} := \sup_{w \in S^{n-1}} \|\langle X, w\rangle\|_{\psi_2}.
\] 
Again, $\| \cdot \|_{\psi_2,\infty}$ is a norm on the space of random vectors in $\setR^n$. For simplicity of notation, we sometimes write that $X$ is $t$-\emph{subgaussian} if $\|X\|_{\psi_2,\infty} \le t$.
Of course, a standard Gaussian $g \sim N(0,I_n)$ is also $O(1)$-subgaussian, but curiously not
with constant 1.
\begin{lemma} \label{lem:ExpOfGaussianSquare}
For $\lambda<\frac{1}{2}$ one has $\E_{g \sim N(0,1)}[\exp(\lambda \cdot g^2)] = \frac{1}{\sqrt{1-2\lambda}}$. In particular, the standard Gaussian $N(\bm{0},I_n)$ is $\sqrt{\tfrac{8}{3}}$-subgaussian.
\end{lemma}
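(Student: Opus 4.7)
The plan is in two stages: first compute the Gaussian-square moment generating function by a direct integral, then optimize to extract the subgaussian constant.

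For the first claim, I would just write out the expectation as a one-dimensional integral and recognize the integrand (up to a constant) as a centered Gaussian density of variance $\sigma^2 = 1/(1-2\lambda)$. Explicitly,
\begin{equation*}
\E_{g \sim N(0,1)}[\exp(\lambda g^2)] = \int_{-\infty}^\infty \frac{1}{\sqrt{2\pi}} \exp\bigl(-\tfrac{1}{2}(1-2\lambda) g^2\bigr)\, dg = \frac{1}{\sqrt{1-2\lambda}},
\end{equation*}
where the condition $\lambda < \tfrac{1}{2}$ is exactly what is needed for the resulting Gaussian to be integrable. This is entirely routine.

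For the subgaussianity of $g \sim N(\bm{0}, I_n)$, the key observation is rotation invariance: for any unit vector $w \in S^{n-1}$, the projection $\langle g, w\rangle$ is distributed as a standard one-dimensional Gaussian. So the $\|\cdot\|_{\psi_2,\infty}$ norm of $g$ equals the $\|\cdot\|_{\psi_2}$ norm of a single $N(0,1)$. To compute this, I would plug $\lambda = 1/t^2$ into the formula above (valid once $t^2 > 2$) and solve the inequality $\E[\exp(g^2/t^2)] = (1 - 2/t^2)^{-1/2} \leq 2$. Squaring both sides and rearranging yields $t^2 \geq 8/3$, and this bound is tight, so the infimum in the definition of $\|g\|_{\psi_2}$ is exactly $\sqrt{8/3}$.

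There is no real obstacle here; the only very minor thing to check is that the infimum is attained (so the inequality is $\leq \sqrt{8/3}$ rather than just $<$), which follows because the map $t \mapsto \E[\exp(g^2/t^2)]$ is continuous and decreasing on $(\sqrt{2}, \infty)$, equalling $2$ precisely at $t = \sqrt{8/3}$.
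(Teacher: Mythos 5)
The paper states this lemma without proof, treating it as routine. Your argument is the standard one and is correct: completing the square to read off the Gaussian normalizing constant $1/\sqrt{1-2\lambda}$, invoking rotation invariance to reduce the vector case to a single coordinate, and then solving $(1-2/t^2)^{-1/2} \leq 2$ to get $t^2 \geq 8/3$. The closing remark about the infimum being attained is fine but not even necessary, since the definition of $\|\cdot\|_{\psi_2}$ in the paper is an infimum over the set $\{t > 0 : \E[\exp(X^2/t^2)] \le 2\}$, and $t = \sqrt{8/3}$ lies in that set because the inequality is non-strict.
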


We will also need the following upper bound on moments of sums of independent random variables:
\begin{lemma}[Rosenthal's inequality~\cite{RosenthalInequality}] \label{lem:RosenthalInequality}
Let $p \ge 2$ and let $X_1, \dots, X_N$ be mean zero independent random variables with finite $p$-th moment $\E [|X_i|^p]$. Then 
\[
\E[|X_1 + \dots + X_N|^p]^{1/p} \le 2^p \cdot \max\Big\{\Big(\sum_{i=1}^N \E[|X_i|^p]\Big)^{1/p},\Big(\sum_{i=1}^N \E[X_i^2]\Big)^{1/2} \Big\}.
\] 
\end{lemma}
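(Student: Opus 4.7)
The plan is to prove Rosenthal's inequality by the classical three-step scheme: symmetrization, a conditional Khintchine inequality, and a Rosenthal-type bound on the quadratic variation itself.

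\textbf{Step 1 (Symmetrization).} I would introduce independent copies $X_i'$ of $X_i$ and set $\tilde X_i := X_i - X_i'$. Since $\E[X_i'] = 0$, the tower property and Jensen's inequality applied conditionally to the convex function $t \mapsto |t|^p$ give
\[\E \Big|\sum_{i=1}^N X_i\Big|^p \;=\; \E \Big|\sum_i X_i - \E_{X'}\big[\sum_i X_i'\big]\Big|^p \;\le\; \E \Big|\sum_i \tilde X_i\Big|^p.\]
The $\tilde X_i$ are symmetric, with $\|\tilde X_i\|_p \le 2\|X_i\|_p$ and $\E \tilde X_i^2 = 2\,\E X_i^2$, so at the cost of universal constants one may assume the original $X_i$ are symmetric.

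\textbf{Step 2 (Conditional Khintchine).} By symmetry, $(X_1,\dots,X_N)$ is equidistributed with $(\varepsilon_1 |X_1|,\dots,\varepsilon_N |X_N|)$, where the $\varepsilon_i$ are independent Rademacher signs, independent of the magnitudes. Applying the standard Khintchine inequality $\|\sum_i \varepsilon_i a_i\|_p \le c\sqrt{p}\,(\sum_i a_i^2)^{1/2}$ after conditioning on $(|X_i|)_i$ yields
\[\E \Big|\sum_i X_i\Big|^p \;\le\; (c\sqrt{p})^p \cdot \E \Big(\sum_i X_i^2\Big)^{p/2}.\]

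\textbf{Step 3 (Quadratic variation).} It remains to bound $\E(\sum_i X_i^2)^{p/2}$ by a constant power times $\max\{\sum_i \E|X_i|^p,\,(\sum_i \E X_i^2)^{p/2}\}$. My approach is induction on $\lceil \log_2 p \rceil$. Minkowski's inequality in $L^{p/2}$ gives
\[\Big\| \sum_i X_i^2 \Big\|_{L^{p/2}} \;\le\; \Big\| \sum_i \big(X_i^2 - \E X_i^2\big) \Big\|_{L^{p/2}} + \sum_i \E X_i^2,\]
and one then applies the inductive hypothesis to the mean-zero independent variables $Y_i := X_i^2 - \E X_i^2$ at exponent $p/2$, using the elementary bounds $\E|Y_i|^{p/2} \le 2^{p/2} \E|X_i|^p$ and (via log-convexity of moments) $\E Y_i^2 \le \E X_i^4 \le (\E X_i^2)^{1-2/p}(\E|X_i|^p)^{2/p}$. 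The base case $p \in [2,4]$ is just orthogonality, $\E(\sum_i Y_i)^2 = \sum_i \E Y_i^2$. The recursion terminates in $O(\log p)$ steps.

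The main obstacle is Step 3: carefully arranging the induction so that the higher moments produced by squaring and centering are each bounded back in terms of $\E X_i^2$ and $\E|X_i|^p$, and so that constants accumulating across recursion levels remain subexponential in $p$. Fortunately the constant $2^p$ stated in the lemma is very generous --- a careful version of the outline above in fact gives a $(C\sqrt p)^p$-type prefactor --- so even a crude analysis suffices. An alternative avoiding induction is a direct truncation at threshold $M := \max\{(\sum_i \E X_i^2)^{1/2},(\sum_i \E|X_i|^p)^{1/p}\}$: the bounded part is controlled by Azuma--Hoeffding, while the tail part satisfies $|\sum_i Z_i|^p \le (\sum_i |Z_i|)^p$ and yields $\sum_i \E|X_i|^p$ after Jensen.
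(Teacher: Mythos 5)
The paper does not actually prove this lemma — it is cited directly from Rosenthal's 1970 paper, so there is no in-paper argument to compare against. Your outline (symmetrization, conditional Khintchine, then Rosenthal for the quadratic variation $\sum_i X_i^2$ by halving the exponent) is one of the standard routes and is sound at a high level, but two details are off.

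First, your log-convexity bound $\E X_i^4 \le (\E X_i^2)^{1-2/p}(\E|X_i|^p)^{2/p}$ has the wrong exponents: the Lyapunov/H\"older interpolation between $L^2$ and $L^p$ through $L^4$ (for $p>4$) gives
\[
\E X_i^4 \;\le\; (\E X_i^2)^{\frac{p-4}{p-2}}\,(\E|X_i|^p)^{\frac{2}{p-2}},
\]
i.e.\ the second exponent is $2/(p-2)$, not $2/p$. Your version fails already for a two-point random variable with $\E X^2 = 1$, $\E X^4 = a^2$, $\E X^6 = a^4$, $p=6$: it would claim $a^2 \le a^{4/3}$. With the corrected exponents, one checks (via H\"older over $i$) that $\sum_i \E X_i^4 \le \max\{(\sum_i \E|X_i|^p)^{4/p}, (\sum_i \E X_i^2)^2\}$, which is exactly what the inductive step needs, so the bug is fixable rather than fatal.

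Second, the base case $p \in [2,4]$ as stated does not go through. You invoke orthogonality $\E(\sum_i Y_i)^2 = \sum_i \E Y_i^2$ for $Y_i := X_i^2 - \E X_i^2$, but these variables only lie in $L^{p/2} \subseteq L^2$ when $p \ge 4$; for $p < 4$ the quantity $\E Y_i^2$ (equivalently $\E X_i^4$) need not be finite under the hypothesis, so the induction cannot descend through $\sum_i X_i^2$ in that range. The clean fix is to treat $p \in [2,4]$ separately without passing to the quadratic variation — e.g.\ via the truncation argument you sketch as an alternative at the end, or via the elementary Rosenthal inequality for nonnegative summands at exponents in $(1,2]$. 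Once that base case is patched and the H\"older exponents corrected, the recursion $R_p \lesssim \sqrt{p}\,\sqrt{R_{p/2}}$ converges to $R_p \lesssim p$ at the $L^p$ level (not $\sqrt{p}$ as you suggest in the last paragraph, but either way comfortably below the stated $2^p$).
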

We also mention that the constant $2^p$ can be improved to $\Theta(p/\log p)$~\cite{LatalaIneq}, though it will not be needed in our application.

Finally, we will use the tail bound form of Talagrand's comparison inequality. See once again the textbook of Vershynin~\cite{HighDimProbabilityVershynin2018} (Chapter 8.6).

\begin{lemma} \label{lem:TalagrandComparison}
Let $K \subseteq \setR^n$ be a symmetric convex body and $X \in \setR^n$ a random vector that is $O(1)$-subgaussian. Then with probability at least $1- \delta$ for any $\delta \in (0, \frac{1}{2}]$,
\[\|X\|_K \lesssim \E_{g \sim N(\bm{0}, I_n)}[\|g\|_K] + \sqrt{\log(1/\delta)} \cdot \frac{1}{\mathrm{inradius}(K)}, \]
where $\mathrm{inradius}(K)$ is the largest $r > 0$ so that $rB^n_2 \subseteq K$. 
\end{lemma}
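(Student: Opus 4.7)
The plan is to reduce the statement to a supremum of a subgaussian process and then invoke the tail-bound form of Talagrand's majorizing measures / comparison theorem. Since $K$ is symmetric, writing $K^\circ = \{y \in \setR^n : \langle x,y\rangle \le 1 \ \forall x \in K\}$ for the polar body, duality gives $\|X\|_K = \sup_{y \in K^\circ} \langle X, y\rangle$, and similarly $\|g\|_K = \sup_{y \in K^\circ} \langle g,y\rangle$ for the standard Gaussian $g$. Thus it suffices to control the supremum of the process $\{Z_y := \langle X, y\rangle\}_{y \in K^\circ}$.

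First, I would verify that this is a subgaussian process on $(K^\circ, \|\cdot\|_2)$ with constant $O(1)$. Indeed, for $y, y' \in K^\circ$, if $y \ne y'$ then by the assumed subgaussianity of $X$,
\[
\|Z_y - Z_{y'}\|_{\psi_2} \;=\; \|y-y'\|_2 \cdot \Big\|\Big\langle X, \tfrac{y-y'}{\|y-y'\|_2}\Big\rangle\Big\|_{\psi_2} \;\lesssim\; \|y-y'\|_2.
\]
Second, the Euclidean diameter of $K^\circ$ is $2 \sup_{y \in K^\circ}\|y\|_2 = 2/\mathrm{inradius}(K)$, since $rB_2^n \subseteq K$ is equivalent to $K^\circ \subseteq (1/r) B_2^n$.

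Third, I would apply the tail-bound form of Talagrand's comparison inequality for subgaussian processes (e.g.\ Vershynin, Theorem 8.5.5 and Exercise 8.6.5): for a subgaussian process $\{Z_t\}_{t \in T}$ with increment-metric $d$ and constant $O(1)$, and any $t_0 \in T$,
\[
\Pr\!\Big[\sup_{t \in T} (Z_t - Z_{t_0}) \;\gtrsim\; \gamma_2(T, d) + u \cdot \mathrm{diam}(T,d)\Big] \;\le\; 2 e^{-u^2}.
\]
Applied with $t_0 = 0 \in K^\circ$ (so $Z_{t_0} = 0$), $T = K^\circ$, $d = \|\cdot\|_2$, and $u = \sqrt{\log(1/\delta)}$, this yields, with probability at least $1-\delta$,
\[
\|X\|_K \;=\; \sup_{y \in K^\circ} Z_y \;\lesssim\; \gamma_2(K^\circ, \|\cdot\|_2) \;+\; \sqrt{\log(1/\delta)} \cdot \frac{1}{\mathrm{inradius}(K)}.
\]
Finally, by Talagrand's majorizing measures theorem, $\gamma_2(K^\circ, \|\cdot\|_2) \asymp \E_{g \sim N(\bm{0},I_n)} \sup_{y \in K^\circ} \langle g,y\rangle = \E\|g\|_K$, which closes the chain.

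The main obstacle is really just invoking the correct black-box: I am not reproving Talagrand's majorizing measures theorem or the concentration inequality for suprema of subgaussian processes. The subtle point to handle carefully is passing from the expectation bound (which is what the equivalence $\gamma_2 \asymp \E\sup_g$ supplies) to a tail bound with the sharp additive $\sqrt{\log(1/\delta)} \cdot \mathrm{diam}$ deviation term; this is exactly what the generic chaining tail inequality gives, so the calculation above is essentially just matching parameters.
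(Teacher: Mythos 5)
Your argument is correct and is exactly the standard proof behind the result the paper cites (Vershynin, Chapter 8.6): dualize $\|X\|_K = \sup_{y\in K^\circ}\langle X,y\rangle$, check the subgaussian increment condition, bound $\mathrm{diam}(K^\circ,\|\cdot\|_2) \le 2/\mathrm{inradius}(K)$, invoke the tail form of Talagrand's comparison (Exercise 8.6.5) together with $\gamma_2(K^\circ,\|\cdot\|_2)\asymp \E\|g\|_K$. The paper does not give its own proof of this lemma but simply refers the reader to Vershynin's book, so your proposal matches the intended route.

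One minor remark worth being aware of: Vershynin's comparison inequality is stated for mean-zero processes, while ``$O(1)$-subgaussian'' here does not formally force $\E[X]=\bm{0}$. This costs nothing, since $\|X\|_{\psi_2,\infty}\lesssim 1$ forces $\|\E[X]\|_2\lesssim 1$, so the mean contributes at most $O(1/\mathrm{inradius}(K))$ to $\|X\|_K$, which is already absorbed into the $\sqrt{\log(1/\delta)}\cdot(1/\mathrm{inradius}(K))$ term because $\delta\le\tfrac12$ gives $\sqrt{\log(1/\delta)}\gtrsim 1$; and in the paper's applications $X$ is genuinely centered anyway.
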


\section{Generalizing Banaszczyk's prefix balancing argument to trees}

We begin by generalizing Banaszczyk's Theorem~\ref{thm:BanaszczykPrefixBalancing} to trees. 
A similar statement may also be found (with vectors assigned to vertices) as Theorem 4.1. in~\cite{Bansal2021PrefixDS}.
\begin{theorem} \label{thm:TreeVectorBalancing}
There exists a constant $\alpha < 5$ such that the following holds. Let $\pazocal{T} = (V,E)$ be a tree with a distinguished root $r \in V$ and $|E| \geq 1$, where each edge $e \in E$ is assigned a vector $v_e \in \setR^n$ with $\|v_e\|_2 \leq 1$.
Let $K \subseteq \setR^n$ be a convex body with $\gamma_{n}(K) \geq 1- \frac{1}{2|E|}$. Then there are signs
$x \in \{ -1,1\}^E$ so that 
\[
  \sum_{e \in P_i} x_e v_e \in \alpha K \quad \forall i \in V 
\]
where $P_i \subseteq E$ are the edges on the path from the root to $i$.
\end{theorem}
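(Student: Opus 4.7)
My plan is to adapt Banaszczyk's prefix-balancing strategy by associating to each vertex $v \in V$ a convex body $\bar K_v \subseteq \setR^n$ whose intended meaning is: whenever the \emph{scaled} partial sum at $v$ lies in $\bar K_v$, the edges in the subtree rooted at $v$ can be signed so that every descendant of $v$ has scaled partial sum in $K$. I would set $\alpha := 1/\beta < 5$ with $\beta$ from Theorem~\ref{thm:BanaszczykStretchedBodyLB}; this guarantees $\|v_e/\alpha\|_2 \le \beta$ for every edge $e$, so Banaszczyk's star operation is applicable to the rescaled edge vectors.

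The bodies $\bar K_v$ would be defined bottom-up on $\pazocal{T}$. For a leaf $v$ set $\bar K_v := K$; for a non-root internal vertex $v$ with children $c_1,\ldots,c_m$ joined by edge vectors $u_1,\ldots,u_m$, set
\[
  \bar K_v := K \cap \bigcap_{i=1}^m \bar K_{c_i} \ast (u_i/\alpha);
\]
and for the root $r$, take
\[
  \bar K_r := \bigcap_{i=1}^m \bar K_{c_i} \ast (u_i/\alpha),
\]
omitting the intersection with $K$. The partial sum at the root equals $0$, and the hypothesis $\gamma_n(K) \ge 1/2$ already forces $0 \in K$ (otherwise one could separate $0$ from $K$ by a hyperplane, producing a half-space of Gaussian mass $<1/2$ containing $K$), so dropping this intersection at the root is costless and is the crucial move that makes the measure accounting tight.

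The main technical step would be to prove inductively that $\gamma_n(\bar K_v) \ge 1/2$ at every vertex; this both validates each star operation and forces $0 \in \bar K_r$ by the same separation argument. Combining Theorem~\ref{thm:BanaszczykStretchedBodyLB} (in the form $\gamma_n(\bar K_c \ast u) \ge \gamma_n(\bar K_c)$) with the union bound $1 - \gamma_n(A \cap B) \le (1-\gamma_n(A)) + (1-\gamma_n(B))$, a subtree induction yields
\[
  1 - \gamma_n(\bar K_v) \le s(v)\,(1-\gamma_n(K)) \ \text{ for } v \ne r, \qquad 1 - \gamma_n(\bar K_r) \le (|V|-1)(1-\gamma_n(K)),
\]
where $s(v)$ denotes the size of the subtree rooted at $v$. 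Since $s(v) \le |V|-1 = |E|$ for any non-root $v$ and $\gamma_n(K) \ge 1 - \frac{1}{2|E|}$, each right-hand side is at most $1/2$.

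Signs are then assigned top-down: starting from $z_r := 0 \in \bar K_r$, whenever the traversal reaches an edge $e = (v,c)$ with vector $u_e$, the chain of inclusions
\[
  z_v \in \bar K_v \subseteq \bar K_c \ast (u_e/\alpha) \subseteq (\bar K_c - u_e/\alpha) \cup (\bar K_c + u_e/\alpha)
\]
produces a sign $x_e \in \{-1,1\}$ with $z_c := z_v + x_e u_e/\alpha \in \bar K_c \subseteq K$; hence the true partial sum at $c$ equals $\alpha z_c \in \alpha K$. I expect the main obstacle to be the measure accounting above: one needs the factor in front of $(1-\gamma_n(K))$ to come out to at most $|E|$ rather than $|V| = |E|+1$, and this is exactly what the asymmetric treatment of the root provides.
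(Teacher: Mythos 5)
Your proof is correct and follows essentially the same strategy as the paper's: build a family of convex bodies bottom-up via Banaszczyk's star operation, control their Gaussian measure by a subtree induction using the union bound, and then assign signs top-down. The one place where you diverge is purely cosmetic but worth noting: you define a separate body $\bar K_r$ at the root (omitting the intersection with $K$) and explicitly invoke the separating-hyperplane fact that $\gamma_n(\bar K_r)\ge 1/2$ forces $\bm 0\in \bar K_r$, whereas the paper defines $K_i$ uniformly for all $i$ (so $\gamma_n(K_r)$ is only bounded below by $\tfrac12-\tfrac{1}{2|E|}$) and instead starts its downward induction at the children of the root, implicitly relying on the same separation argument one level later (from $\gamma_n(K_i)\ge\tfrac12$ one gets $\gamma_n(K_i*\beta v_{\{r,i\}})\ge\tfrac12$, hence $\bm 0$ lies in that body, hence a valid sign exists). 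Your version makes that step explicit and keeps the measure bookkeeping tighter at the root, which is a small clarity gain over the paper's presentation, but the underlying argument is identical.
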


\begin{proof}
  Let $\beta > 1/5$ be the constant from Theorem~\ref{thm:BanaszczykStretchedBodyLB}.
  We write $C_i \subseteq V$ as all the \emph{children} of $i$ where $C_i = \emptyset$ for leaves
  and we write $D_i \subseteq V$ as the \emph{descendants} of $i$ (including $i$ itself).
  In particular $\{ i\} \cup C_i \subseteq D_i$. We construct a sequence of target bodies
  by setting
  \[
     K_i := \Big(\bigcap_{j \in C_i} (K_j * \beta v_{\{i,j\}})\Big) \cap K.
  \]
  Then for any leaf $i \in V$ one simply has $K_i = K$. We prove that the constructed bodies are still large: \\
  {\bf Claim I.} \emph{For all $i \in V$ one has $\gamma_n(K_i) \geq 1-\frac{|D_i|}{2|E|}$.} \\
  {\bf Proof of Claim I.} We prove the claim by induction. The claim is true for any leaf $i \in V$
  because in that case $|D_i|=1$ and $\gamma_n(K_i) = \gamma_n(K) \geq 1-\frac{1}{2|E|}$. Now consider
  any interior node $i$ and assume the claim is true for all its children. In particular for each $j \in C_i$
  we have $\gamma_n(K_j) \geq 1-\frac{|D_j|}{2|E|} \geq \frac{1}{2}$ by induction and the fact for any non-root
  node $|D_j| \leq |E|$. Hence 
  \[
   \gamma_n(K_j * \beta v_{\{i,j\}}) \stackrel{\textrm{Thm~\ref{thm:BanaszczykStretchedBodyLB}}}{\geq} \gamma_n(K_j) \stackrel{\textrm{induction}}{\geq} 1-\frac{|D_j|}{2|E|}. 
  \]
  Then
  \begin{eqnarray*}
    \gamma_n(K_i) &=& \gamma_n\Big( \Big(\bigcap_{j \in C_i} (K_j * \beta v_{\{i,j\}})\Big) \cap K \Big) \\
    &\stackrel{\textrm{union bound}}{\geq}& 1 - \sum_{j \in C_i} \underbrace{\gamma_n(\setR^n \setminus (K_j * \beta v_{\{i,j\}}))}_{\leq \frac{|D_j|}{2|E|}} - \underbrace{\gamma_n(\setR^n \setminus K)}_{\leq \frac{1}{2|E|}} \\
    &\geq& 1 - \frac{1}{2|E|} \underbrace{\Big(1+\sum_{j \in C_i} |D_j|\Big)}_{=|D_i|} = 1-\frac{|D_i|}{2|E|}. \quad \qed 
  \end{eqnarray*}
{\bf Claim II.} \emph{There are signs $x \in \{-1,1\}^E$ so that $\sum_{e \in P_i} x_{e} v_{e} \in \frac{1}{\beta} K_i$ for all $i \in V \setminus \{r\}$.} \\
{\bf Proof of Claim II.} We construct the signs in increasing distance to the root. If $i$ is a child of the root itself, then
the claim is true as $\gamma_n(K_i) \geq \frac{1}{2}$ and $\|v_{\{r,i\}}\|_2 \leq 1$.
Suppose for some node $i \in V \setminus \{r\}$ we have determined all the signs $x_{e} \in \{ -1,1\}$ with $e \in P_i$
and in particular $a := \beta\sum_{e \in P_i} x_{e} v_{e} \in K_i$. Consider any child $j \in C_i$. 
Then
\[
  a \in K_i \stackrel{\textrm{Def }K_i}{\subseteq} K_{j} * \beta v_{\{i,j\}} \stackrel{\textrm{Thm~\ref{thm:BanaszczykStretchedBodyLB}}}{\subseteq} (K_{j} + \beta v_{\{i,j\}}) \cup (K_{j}-\beta v_{\{i,j\}})
\]
Then we may pick a sign $x_{\{i,j\}} \in \{ -1,1\}$ so that $a + \beta x_{\{i,j\}} v_{\{i,j\}} \in K_{j}$. \qed

Then the overall claim follows from the fact that $K_i \subseteq K$ for all $i \in V \setminus \{r\}$ and for the root itself one trivially has $\bm{0} \in K$.
\end{proof}

\section{The body of subgaussian distributions}

In this section, we introduce the target body $K$ that we will use and we will prove that
its Gaussian measure is close enough to $1$ to make the argument work.
For $n,N \in \setN$ and a small $\delta > 0$, we define 
\begin{equation} \label{eq:DefK}
  K := \Big\{ (y^{(1)},\ldots,y^{(N)}) \in \setR^{Nn} \mid \| Y\|_{\psi_2,\infty} \le 2+\delta\textrm{ where }Y \sim \{y^{(1)}, \dots, y^{(N)}\}\Big\}.
\end{equation}
Intuitively, the vectors in $K$ consist of $N$ many blocks of dimension $n$ with the property that a uniform
random block generates a subgaussian random vector.
Since $\| \cdot \|_{\psi_2,\infty}$ is a norm, $K$ is a symmetric convex body. The main result for this section will
be that $K$ has a large Gaussian measure if $N$ is large enough.

\begin{proposition} \label{prop:GaussianMeasureOfK}
  For any $\delta > 0$,
  there is a constant $C_\delta >0$ so that for all $n,N \in \setN$ one has $\gamma_{Nn}(K) \geq 1- \frac{C_\delta^n}{N^{1+\delta}}$. 
\end{proposition}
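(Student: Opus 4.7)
Let $g^{(1)},\ldots,g^{(N)} \sim N(\bm{0},I_n)$ be i.i.d.\ standard Gaussians, so that $(g^{(1)},\ldots,g^{(N)})$ is a standard Gaussian in $\setR^{Nn}$. The plan is to bound the failure event
\[ (g^{(1)},\ldots,g^{(N)}) \notin K \;\;\Longleftrightarrow\;\; \sup_{w \in S^{n-1}} \frac{1}{N}\sum_{\ell=1}^{N}\exp\!\Bigl(\frac{\langle g^{(\ell)},w\rangle^2}{(2+\delta)^2}\Bigr) > 2 \]
by first turning the supremum over $S^{n-1}$ into a finite maximum and then controlling each term with a moment inequality. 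Concretely, I would fix a small auxiliary $\varepsilon = \varepsilon(\delta) > 0$, take an $\varepsilon$-net $W \subseteq S^{n-1}$ of size $|W| \leq (3/\varepsilon)^n$ (Lemma~\ref{lem:SizeOfEpsilonNet}), and observe via Lemma~\ref{lem:NetImpliesSmallConicComb} together with the triangle inequality for $\|\cdot\|_{\psi_2}$ that $\|Y\|_{\psi_2,\infty} \leq \tfrac{1}{1-\varepsilon}\max_{w \in W}\|\langle Y,w\rangle\|_{\psi_2}$. Setting $c := (2+\delta)(1-\varepsilon)$, it then suffices to prove $\|\langle Y,w\rangle\|_{\psi_2} \leq c$ at every single $w \in W$ and union-bound.

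\textbf{Per-direction bound via Rosenthal.} Fix $w \in W$ and set $Z_\ell := \exp(\langle g^{(\ell)},w\rangle^2/c^2)$, so that the failure at $w$ reads $\frac{1}{N}\sum_\ell Z_\ell > 2$. Since $\langle g^{(\ell)},w\rangle \sim N(0,1)$, Lemma~\ref{lem:ExpOfGaussianSquare} gives $\mu := \E[Z_\ell] = (1-2/c^2)^{-1/2}$ and $\E[Z_\ell^p] = (1-2p/c^2)^{-1/2}$ for $p < c^2/2$. I would choose $\varepsilon$ small enough that both $c > 2$ (so $\mu < \sqrt 2 < 2$ and this is a genuine deviation above the mean) and $c^2/2 > 2+2\delta$ (so the $p$-th moment is finite for $p := 2+2\delta$); both are simultaneously achievable because $(2+\delta)^2/2 - (2+2\delta) = \delta^2/2 > 0$. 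With this setup, Rosenthal's inequality (Lemma~\ref{lem:RosenthalInequality}) applied to the centered sum $\sum_\ell(Z_\ell - \mu)$ bounds its $p$-th moment by $(C'_\delta\sqrt{N})^p$ for all $N$ (the variance term in Rosenthal dominates the $N^{1/p}$ term once $p > 2$, and $\E[Z_\ell^2]$ is finite from $c^2 > 4$). Markov then yields
\[ \Pr\!\Bigl[\frac{1}{N}\sum_\ell Z_\ell > 2\Bigr] \;\leq\; \frac{(C'_\delta\sqrt{N})^p}{((2-\mu)N)^p} \;=\; \frac{C''_\delta}{N^{p/2}} \;=\; \frac{C''_\delta}{N^{1+\delta}}. \]

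\textbf{Assembly and main obstacle.} A union bound over $|W| \leq (3/\varepsilon)^n$ then produces the desired $\Pr[(g^{(1)},\ldots,g^{(N)}) \notin K] \leq C_\delta^n / N^{1+\delta}$ with $C_\delta := 3 C''_\delta / \varepsilon$. I expect the main obstacle to be the quantitative tension that pins down $\varepsilon$: it must be small enough that $c = (2+\delta)(1-\varepsilon)$ still exceeds $\sqrt{4+4\delta}$ (otherwise the $(2+2\delta)$-th moment of $Z_\ell$ blows up and we lose the $N^{1+\delta}$ rate entirely), yet $1/\varepsilon$ must stay a constant depending only on $\delta$ since it enters the final bound as the $n$-th power $(3/\varepsilon)^n$. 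The algebraic identity $(2+\delta)^2/2 - (2+2\delta) = \delta^2/2 > 0$ is precisely what secures a positive window of valid $\varepsilon$ for each fixed $\delta > 0$, and the shrinking of this window as $\delta \downarrow 0$ is what forces $C_\delta \to \infty$ in that regime.
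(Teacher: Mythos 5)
Your proposal is correct and follows essentially the same route as the paper's own proof. The paper isolates the per-direction estimate as a separate lemma (Lemma~\ref{lem:SinglewBound}, proved via Rosenthal and Markov with exponent $p = C/2 - 2\delta' = 2+2\delta$) and then composes it with the $\varepsilon$-net reduction of Lemma~\ref{lem:EpsNetSubgaussian}, whereas you inline both steps; but the choice $p = 2+2\delta$, the role of the $\delta^2/2$ slack, the constraint $c^2/2 > 2+2\delta$ (equivalently $C/2 > p$), and the final union bound over $(3/\varepsilon)^n$ net points are all identical.
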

We first show how to control the deviation in a single direction $w$. Note that random variables for the form $X := \exp(g^2)$ with $g \sim N(0,\sigma^2)$ and $\sigma>0$ have \emph{heavy tails}, which means
they do not possess finite exponential moments (i.e. $\E[e^{\lambda X}]=\infty$ for all $\lambda>0$). That implies in particular that standard Chernoff bounds cannot be used to derive concentration.
\begin{lemma} \label{lem:SinglewBound}
  For any $C > 4$ and $\delta' \in \Big(0, \frac{C}{4} - 1\Big)$, there is a $C'>0$ so that for any unit vector $w \in S^{n-1}$, the set
  \[
  K_w := \Big\{ (y^{(1)},\ldots,y^{(N)}) \in \setR^{Nn} \mid \E_{\ell \sim [N]}\big[ \exp\big(\tfrac{1}{C} \left<w,y^{(\ell)}\right>^2\big) \big] \leq 2\Big\}
\]
satisfies $\gamma_{Nn}(K_w) \ge 1 - \frac{C'}{N^{C/4 - \delta'}}$.
\end{lemma}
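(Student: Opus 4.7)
The plan is to translate the claim into a one-dimensional concentration problem and then apply Rosenthal's inequality with an appropriately chosen moment.

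First I would observe that under the Gaussian measure $\gamma_{Nn}$ on $\setR^{Nn}$, the random variables $g_\ell := \langle w, y^{(\ell)}\rangle$ are i.i.d.\ $N(0,1)$ since $\|w\|_2 = 1$. Setting $X_\ell := \exp\!\bigl(\tfrac{1}{C} g_\ell^2\bigr)$, the set $K_w$ becomes exactly the event $\{\tfrac{1}{N}\sum_\ell X_\ell \leq 2\}$, so the task reduces to bounding $\Pr\!\bigl[\tfrac{1}{N}\sum_\ell X_\ell > 2\bigr]$. Using Lemma~\ref{lem:ExpOfGaussianSquare}, the mean is
\[
\mu := \E[X_\ell] = \frac{1}{\sqrt{1 - 2/C}},
\]
and the hypothesis $C > 4$ yields $\mu < \sqrt{2}$, so in particular $2 - \mu > 2 - \sqrt 2 > 0$, and we need to control the centered sum $\sum_\ell (X_\ell - \mu)$ in the deviation $N(2-\mu)$.

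The next step is to pick the moment order. Set $p := C/2 - 2\delta'$; the hypothesis $\delta' < C/4 - 1$ ensures $p > 2$, and $p < C/2$ ensures that
\[
\E[X_\ell^p] = \E\!\bigl[\exp(p g_\ell^2/C)\bigr] = (1 - 2p/C)^{-1/2}
\]
is a finite constant depending only on $C$ and $\delta'$ (again by Lemma~\ref{lem:ExpOfGaussianSquare}). So $\|X_\ell - \mu\|_p$ and $\|X_\ell - \mu\|_2$ are both bounded by constants $M = M(C,\delta')$. Now I would apply Markov and Rosenthal (Lemma~\ref{lem:RosenthalInequality}) to the centered i.i.d.\ sum:
\[
\Pr\!\Big[\Big|\sum_\ell (X_\ell-\mu)\Big| > N(2-\mu)\Big] \leq \frac{\E\bigl[\bigl|\sum_\ell(X_\ell-\mu)\bigr|^p\bigr]}{(N(2-\mu))^p} \leq \frac{(2^p)^p \, \max\bigl\{N \|X_1-\mu\|_p^p,\, N^{p/2}\|X_1-\mu\|_2^p\bigr\}}{N^p (2-\mu)^p}.
\]
Since $p > 2$, we have $N^{1-p} \leq N^{-p/2}$ for all $N \geq 1$, so the variance term dominates and the above is at most
\[
\frac{2^{p^2} M^p}{(2-\mu)^p}\cdot N^{-p/2} = \frac{C'}{N^{C/4 - \delta'}}
\]
for a constant $C' = C'(C,\delta')$, which is the claimed bound.

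The main obstacle is that $X_\ell = \exp(g_\ell^2/C)$ is heavy-tailed (no exponential moments exist), which rules out the usual Chernoff-style proof. That is exactly why Rosenthal's inequality is the right tool, and the exponent $C/4$ in the conclusion comes from the variance term $N^{p/2}$ in Rosenthal with $p$ approaching the divergence threshold $C/2$. The conditions $C > 4$ and $\delta' < C/4 - 1$ are precisely what is needed to keep $p$ in the admissible range $(2, C/2)$ so that both the variance term dominates the $p$-th-moment term and $\|X_1\|_p$ stays finite.
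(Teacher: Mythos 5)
Your proof is correct and follows essentially the same route as the paper: reduce to the i.i.d.\ real random variables $\exp(g_\ell^2/C)$, which are heavy-tailed, then bound the deviation of their centered sum via Markov's inequality on the $p$-th moment and Rosenthal's inequality, with $p := C/2 - 2\delta'$ chosen in $(2, C/2)$ so that the variance term $N^{p/2}$ dominates while $\E[X_\ell^p]$ remains finite. The only (welcome) difference is that you explicitly verify $2 - \mu > 0$ via $\mu < \sqrt{2}$, a check the paper leaves implicit.
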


\begin{proof}
  Since  $y^{(\ell)} \sim N(\bm{0},I_n)$, the inner product satisfies $\langle w, y^{(\ell)} \rangle \sim N(0,1)$. 
  We define $Y_\ell := \exp\big(\tfrac{1}{C} \left<w,y^{(\ell)}\right>^2\big)$ with $\mu_C := \E[Y_\ell] = \Big(1-\frac{2}{C}\Big)^{-1/2}$ by Lemma~\ref{lem:RosenthalInequality}
  and the centered random variable $X_\ell := Y_\ell - \mu_C$, so that for any $p > 2$ we have by Markov
\begin{align*} \Pr \left[Y_1 + \dots + Y_N > 2N\right] & = \Pr \left[X_1 + \dots + X_N > (2 - \mu_C) \cdot N \right] \\ &  \le \Pr \left[|X_1 + \dots + X_N|^p > (2 -\mu_C)^p \cdot N^p \right]\\ & \le \frac{\E[|X_1 + \dots + X_N|^p]}{(2 -\mu_C)^p N^p}. \end{align*} 
Since $X_\ell$ are mean zero independent random variables with $p$-th moment
\[
\E[|X_\ell|^p] = E[|Y_\ell - \mu_C|^p] \le \E[Y_\ell^p] + \mu_C^p = \mu_{C/p} + \mu_C^p,
\]
which is a finite constant $M_p$ for $p < C/2$, we may apply Lemma~\ref{lem:RosenthalInequality} to obtain

\[ \E[|X_1 + \dots + X_N|^p] \le \Big(2^{p} \cdot \max\{M_p \cdot N^{1/p}, M_2 \cdot N^{1/2}\}\Big)^p \le C_p N^{p/2}. \]

For $p := C/2 - 2\delta'$, we conclude $\Pr \left[Y_1 + \dots + Y_N > 2N\right] \le \frac{C_p}{(2-\mu_C)^p N^{p/2}} = \frac{C'}{N^{C/4-\delta'}}$ for some constant $C'$ depending only on $C$ and $\delta'$. \end{proof}

\begin{remark}
Lemma~\ref{lem:SinglewBound} is tight in the sense that one cannot hope for a lower bound $\gamma_{Nn}(K_w) \ge 1 - O(1/N)$ for any $C < 4$. Indeed, for $Y_\ell := \exp\big(\tfrac{1}{C} \left<w,y^{(\ell)}\right>^2\big)$, one has
\[ \Pr \left[Y_1 + \dots + Y_N > 2N\right] \ge \Pr \Big[\max_{\ell \in [N]} Y_\ell > 2N\Big] = 1 - \Big(\Pr_{g \sim N(0,1)} \Big[g \le \sqrt{C \log(2N)}\Big]\Big)^N, \]
which is $\Omega\Big(\tfrac{1}{N^{C/2 - 1} \sqrt{\log(2N)}}\Big)$ for any $C \ge 2$ by using standard Gaussian tail bounds.
\end{remark}

\begin{remark}
In fact, the lower bound above is tight up to constants for $C > 8/3$ (which ensures $2 > \mu_C$). This is a consequence of Theorem 3.1.6 in~\cite{borovkov_borovkov_2008} (see Equation 2.1.6 for context). The proof follows from a truncation of the random variables and a careful estimation of the resulting moment generating function. Thus there is a stronger version of Lemma~\ref{lem:SinglewBound} which also works for $C = 4$, although it does not seem to lead to an improvement of Proposition~\ref{prop:GaussianMeasureOfK}.
\end{remark}

Obviously one cannot combine Lemma~\ref{lem:SinglewBound} with a union bound over the infinitely
many vectors $w \in S^{n-1}$, but it is a standard argument that in such cases it suffices to control the deviation in
directions of an $\varepsilon$-net, see again the textbook of Artstein-Avidan, Giannopoulos and Milman~\cite{AsymptoticGeometricAnalysis-Book2015}.
\begin{lemma} \label{lem:EpsNetSubgaussian}
  Let $W \subseteq S^{n-1}$ be an $\varepsilon$-net for $0<\varepsilon<1$ 
  and let $X \in \setR^n$ be a random vector so that $\|\left<X,w\right> \|_{\psi_2} \leq 1$ for all $w \in W$. Then $\|X\|_{\psi_2,\infty} \leq \frac{1}{1-\varepsilon}$.
\end{lemma}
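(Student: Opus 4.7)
The plan is to combine the preceding net-covering lemma (Lemma~\ref{lem:NetImpliesSmallConicComb}) with the fact that $\|\cdot\|_{\psi_2}$ is a norm on the space of jointly distributed random variables, so that the triangle inequality propagates the subgaussian bound from the net $W$ to all directions on the sphere.

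Concretely, fix an arbitrary direction $w_0 \in S^{n-1}$. By Lemma~\ref{lem:NetImpliesSmallConicComb}, there exists a nonnegative coefficient vector $\lambda \in \setR_{\geq 0}^W$ with $w_0 = \sum_{w \in W} \lambda_w w$ and $\sum_{w \in W} \lambda_w \leq \tfrac{1}{1-\varepsilon}$. Linearity of the inner product gives
\[
  \langle X, w_0 \rangle = \sum_{w \in W} \lambda_w \langle X, w \rangle,
\]
and then the triangle inequality for the Orlicz norm $\|\cdot\|_{\psi_2}$ (noting that this holds even though the random variables $\langle X,w\rangle$ are not independent) yields
\[
  \|\langle X, w_0 \rangle\|_{\psi_2} \;\leq\; \sum_{w \in W} \lambda_w \, \|\langle X, w \rangle\|_{\psi_2} \;\leq\; \sum_{w \in W} \lambda_w \;\leq\; \frac{1}{1-\varepsilon}.
\]
Taking the supremum over $w_0 \in S^{n-1}$ gives $\|X\|_{\psi_2,\infty} \leq \tfrac{1}{1-\varepsilon}$, which is exactly the claim.

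There isn't really a hard step here; the whole argument is a one-line consequence of the two ingredients, namely the conic-combination bound from Lemma~\ref{lem:NetImpliesSmallConicComb} and the fact that the subgaussian norm is genuinely a norm (as was stressed in the Preliminaries). The only subtlety worth flagging is that one must use the \emph{nonnegative} conic combination rather than a mere linear combination, since triangle inequality with the bound $\sum |\lambda_w|$ would otherwise require a net bound on $\|\lambda\|_1$ and not on $\sum_w \lambda_w$; Lemma~\ref{lem:NetImpliesSmallConicComb} is precisely tailored to provide this.
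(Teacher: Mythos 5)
Your proof is correct and follows exactly the same route as the paper: apply Lemma~\ref{lem:NetImpliesSmallConicComb} to write an arbitrary unit direction as a conic combination of net vectors with coefficient sum at most $\frac{1}{1-\varepsilon}$, then use the triangle inequality for the $\psi_2$-norm. One small quibble with your closing remark: the triangle inequality naturally produces a bound in terms of $\|\lambda\|_1 = \sum_w |\lambda_w|$, so nonnegativity of $\lambda$ is not itself essential to this step — what matters is the $\ell_1$ bound, which Lemma~\ref{lem:NetImpliesSmallConicComb} happens to deliver via a conic (hence automatically $\ell_1$-controlled) representation.
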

\begin{proof}
  Fix $w \in S^{n-1}$. By Lemma~\ref{lem:NetImpliesSmallConicComb} there are coefficients $\lambda \in \setR_{\geq 0}^W$
  with $w = \sum_{u \in W} \lambda_u u$ and $\|\lambda\|_1 \leq \frac{1}{1-\varepsilon}$. Then as $\| \cdot \|_{\psi_2}$
  is a norm we have
  \[
   \|\left<X,w\right>\|_{\psi_2} = \Big\| \left<X,\sum_{u \in W} \lambda_u u\right>\Big\|_{\psi_2} \leq \sum_{u \in W} \lambda_u \underbrace{\|\left<X,u\right>\|_{\psi_2}}_{\leq 1} \leq \frac{1}{1-\varepsilon}.\qedhere
  \]
\end{proof}

Now we are ready to prove Proposition~\ref{prop:GaussianMeasureOfK}:

\begin{proof}[Proof of Prop.~\ref{prop:GaussianMeasureOfK}]
  Set $C := (2+\delta)^2 - \delta^2/2$, $\eps := 1 - \frac{\sqrt{C}}{2+\delta}$ and let $W$ be an $\eps$-net of $S^{n-1}$ of size at most $(\frac{3}{\eps})^n$ (see Lemma~\ref{lem:SizeOfEpsilonNet}).
  Applying Lemma~\ref{lem:SinglewBound} with $C$ and $\delta' := \delta^2/8$ (chosen so that $C/4 - \delta' = 1+\delta$) and taking the union bound over all $w \in W$,
  it follows that for $y^{(\ell)} \sim N(\bm{0}, I_n)$ and $Y \sim \{y^{(1)},\dots, y^{(N)}\}$ one has
  \[
    \Pr[\| \langle Y, w\rangle \|_{\psi_2} \le \sqrt{C} \; \forall w \in W] \geq 1 - \Big(\frac{3}{\eps}\Big)^n \cdot \frac{C'}{N^{1+\delta}}
  \]
  for some $C'>0$.
  If this event happens, then by Lemma~\ref{lem:EpsNetSubgaussian} also $\|Y\|_{\psi_2,\infty} \leq 2+\delta$.
\end{proof}


\section{Subgaussianity over trees}

Recall that in Theorem~\ref{thm:TreeVectorBalancing} we have proven that we can find a \emph{single coloring} so that
all vectors on any path in a tree are balanced into a (large enough) convex set $K$. We can use the same argument
to find a \emph{distribution} that keeps all those vector sums subgaussian.
\begin{theorem} \label{thm:TreeDistribution} There exists a constant $\gamma < 10$ such that the following holds. Let $\pazocal{T} = (V,E)$ be a tree with a distinguished root, where each edge $e \in E$ is assigned a vector $v_e \in \setR^n$ with $\|v_e\|_2 \leq 1$.
Then there is a distribution $\pazocal{D}$ over $\{ -1,1\}^E$ so that for $x \sim \pazocal{D}$, 
$
  \sum_{e \in P_i} x_e v_e 
$
is $\gamma$-subgaussian for every $i \in V$ where $P_i \subseteq V$ are the edges on the path from the root to $i$. 
\end{theorem}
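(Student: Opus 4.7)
The plan is to execute the \emph{cloning trick} outlined in the introduction, reducing Theorem~\ref{thm:TreeDistribution} to the tree balancing result (Theorem~\ref{thm:TreeVectorBalancing}) applied to an enlarged instance, with the target body $K$ from \eqref{eq:DefK} whose Gaussian measure was analyzed in Proposition~\ref{prop:GaussianMeasureOfK}.

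First, I would pick a small constant $\delta > 0$ so that $\alpha(2+\delta) < 10$, which is possible because $\alpha < 5$ in Theorem~\ref{thm:TreeVectorBalancing}. Given the original tree $\pazocal{T} = (V,E)$ with vectors $v_e$, I build a \emph{blown-up tree} $\pazocal{T}' = (V',E')$ by subdividing each edge $e \in E$ into a directed sequence of $N$ new edges $e^{(1)},\ldots,e^{(N)}$; each new edge $e^{(\ell)}$ is labelled by the vector $v_e^{(\ell)} \in \setR^{Nn}$ defined as $v_e$ placed in the $\ell$-th block of coordinates and zero elsewhere. These are orthogonal copies of $v_e$ and satisfy $\|v_e^{(\ell)}\|_2 = \|v_e\|_2 \le 1$. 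The new tree has $|E'| = N|E|$ edges, and for each original vertex $i \in V$, the root-path $P_i'$ in $\pazocal{T}'$ consists of exactly the edges $\{e^{(\ell)} : e \in P_i, \ \ell \in [N]\}$.

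Next, I apply Theorem~\ref{thm:TreeVectorBalancing} in ambient dimension $Nn$ to the body $K \subseteq \setR^{Nn}$ of \eqref{eq:DefK}. To meet its hypothesis I need $\gamma_{Nn}(K) \geq 1 - \frac{1}{2|E'|} = 1 - \frac{1}{2N|E|}$; by Proposition~\ref{prop:GaussianMeasureOfK} the body satisfies $\gamma_{Nn}(K) \ge 1 - C_\delta^n / N^{1+\delta}$, so I choose $N$ large enough that $C_\delta^n / N^{1+\delta} \le 1/(2N|E|)$, equivalently $N^\delta \ge 2|E| \cdot C_\delta^n$. This yields signs $x_e^{(\ell)} \in \{-1,1\}$ for every edge of $\pazocal{T}'$ such that, for every original vertex $i \in V$,
\[
 \sum_{e \in P_i} \sum_{\ell=1}^N x_e^{(\ell)} v_e^{(\ell)} \in \alpha K.
\]
By the block structure of the $v_e^{(\ell)}$, the $\ell$-th block of this vector is exactly $\sum_{e \in P_i} x_e^{(\ell)} v_e \in \setR^n$. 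By the definition of $K$, this means that if we draw $\ell \sim [N]$ uniformly, the resulting random vector $\sum_{e \in P_i} x_e^{(\ell)} v_e$ has $\|\cdot\|_{\psi_2,\infty} \le \alpha(2+\delta) < 10$.

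Finally, I define the distribution $\pazocal{D}$ over $\{-1,1\}^E$ as follows: sample $\ell \sim [N]$ uniformly and output the signs $(x_e^{(\ell)})_{e \in E}$. For $x \sim \pazocal{D}$ and any $i \in V$, the random vector $\sum_{e \in P_i} x_e v_e$ has the distribution just described, and is therefore $\gamma$-subgaussian with $\gamma := \alpha(2+\delta) < 10$. The main technical content has already been absorbed into Theorem~\ref{thm:TreeVectorBalancing} and Proposition~\ref{prop:GaussianMeasureOfK}, so the only step requiring real care is the bookkeeping that (a) the clones $v_e^{(\ell)}$ live in disjoint coordinate blocks, so the path-sum in the blown-up tree decomposes across blocks, and (b) $N$ can always be chosen large enough (depending on $\delta$, $n$, and $|E|$) because the Gaussian measure bound $1 - C_\delta^n/N^{1+\delta}$ beats the required $1 - 1/(2N|E|)$ for any fixed $\delta > 0$.
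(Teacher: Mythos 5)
Your proof is correct and follows essentially the same approach as the paper: subdivide each edge into $N$ orthogonal clones living in disjoint coordinate blocks of $\setR^{Nn}$, invoke Theorem~\ref{thm:TreeVectorBalancing} with the body $K$ from \eqref{eq:DefK} (using Proposition~\ref{prop:GaussianMeasureOfK} to choose $N$ large enough that $\gamma_{Nn}(K) \geq 1 - \tfrac{1}{2N|E|}$), and then sample a uniform block index $\ell \sim [N]$ to turn the single sign vector into a distribution with subgaussian parameter $\alpha(2+\delta) < 10$. The bookkeeping you flag — disjoint-block decomposition of the path sum, and the tradeoff between $N^{1+\delta}$ and $N|E|$ — is exactly what the paper's argument rests on.
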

\begin{proof}
  Let $N \in \setN$ be a large enough parameter that we determine later.
We replace each edge $e \in E$ by a path consisting of $N$ many edges where the vectors assigned to edges are
\[
  v_e^{(\ell)} := (\bm{0},\ldots,\bm{0},v_e,\bm{0},\ldots,\bm{0}) \in \setR^{Nn} \quad \forall \ell=1,\ldots,N
\]
Note that still $\|v_e^{(\ell)}\|_2 \leq 1$. We call the new tree $\pazocal{T}' = (V',E')$ and note that $|E'| = N\cdot |E|$.
We define $K$ as in \eqref{eq:DefK} and by Proposition~\ref{prop:GaussianMeasureOfK} we have 
\[
  \gamma_{Nn}(K) \geq 1-\frac{C_\delta^n}{N^{1+\delta}} \geq 1-\frac{1}{2|E'|},
\]
choosing $N := (2|E| \cdot C_\delta^n)^{1/\delta}$. 
We apply Theorem~\ref{thm:TreeVectorBalancing} to the tree $\pazocal{T}'$ in order to obtain signs $(x_e^{\ell})_{e \in E, \ell \in [N]} \in \{ -1,1\}^{|E| \cdot N}$ so that
\[
 \sum_{\ell=1}^N \sum_{e \in P_i} x_e^{(\ell)} v_{e}^{(\ell)} \in \alpha K
\]
for all $i \in V$ where $\alpha < 5$. If we draw an index $\ell \sim [N]$ uniformly and set $X := (x^{(\ell)}_1,\ldots,x^{(\ell)}_{|E|})$, then by
construction of $K$ one has
\[
    \Big\|\sum_{e \in P_i} X_ev_e\Big\|_{\psi_2,\infty} \leq \alpha \cdot (2 + \delta) < 10 \;\; \forall i \in V
  \]
  if we choose $\delta>0$ small enough.
\end{proof}
Note that the produced distribution $\pazocal{D}$ is indeed the uniform distribution over a multiset of $(2|E| \cdot C_\delta^n)^{1/\delta}$ sign vectors. We remark that even if $\pazocal{T}$ is a path with the $n$ vectors $e_1,\ldots,e_n$,
any $O(1)$-subgaussian distribution over sign vectors needs to have support at least $2^{\Omega(n)}$. 

We should also point out that our proof strategy is related to an argument by Raghu Meka to use Banaszczyk's
Theorem~\ref{thm:BanaszczykStretchedBodyLB} to prove the existence of a good SDP solution. This was reported in the work of \cite{FlowTimeAndPrefixBeckFialaSTOC2022} in the context of an SDP solution
with small discrepancy for all prefixes. After completing a preliminary draft, we learned the existence of an unpublished manuscript~\cite{NikolovPersonal}
proving Theorem~\ref{thm:TreeDistribution} for the special case where $\pazocal{T}$ is a path.

\section{Existence of an online algorithm}

Finally, we use the tree subgaussianity from Theorem~\ref{thm:TreeDistribution} to show the existence of an online algorithm which maintains subgaussian prefixes.
{}
\begin{theorem} \label{thm:fixedTimeStepT}
For every $T \in \mathbb{N}$, there exists a randomized online algorithm which, upon receiving a vector $v_i \in \setR^n$ with $\|v_i\|_2 \le 1$ for each $i \in [T]$, outputs a random sign $x_i \in \{-1,1\}$ so that the prefix sum $\sum_{j=1}^i x_j v_j$ is $10$-subgaussian. The algorithm runs in time $\exp(T^{CnT})$ for some universal constant $C > 0$.
\end{theorem}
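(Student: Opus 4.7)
The plan is to reduce to Theorem~\ref{thm:TreeDistribution} by discretizing the space of adversarial vectors via an $\varepsilon$-net. Fix a small $\varepsilon > 0$ (to be chosen later as a function of $T$), and let $W \subseteq B^n_2$ be an $\varepsilon$-net of the closed unit ball with $|W| \le (3/\varepsilon)^n$ (the volumetric argument of Lemma~\ref{lem:SizeOfEpsilonNet} adapts immediately to $B^n_2$). Build the complete rooted tree $\pazocal{T}$ of depth $T$ in which every internal node has exactly $|W|$ children, one per net vector, with the edge to that child labelled by the corresponding $w \in W$. Root-to-node paths in $\pazocal{T}$ then enumerate all possible rounded sequences the adversary could present. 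Applying Theorem~\ref{thm:TreeDistribution} to $\pazocal{T}$ yields a distribution $\pazocal{D}$ over sign assignments $x \in \{-1,1\}^E$ such that $\sum_{e \in P_i} x_e v_e$ is $\gamma$-subgaussian for every node $i \in V$, with some constant $\gamma < 10$.

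The online algorithm samples $x \sim \pazocal{D}$ once at the outset and maintains a current node initialized at the root. When $v_i$ arrives, it rounds $v_i$ to the nearest net point $\tilde v_i \in W$, walks down the corresponding edge (updating the current node), and outputs the pre-sampled sign $x_i$ on that edge. Since each output depends only on the initial random sample and $v_1,\ldots,v_i$, the algorithm is genuinely online. For correctness, decompose
\[
 \sum_{j=1}^i x_j v_j \;=\; \sum_{j=1}^i x_j \tilde v_j \;+\; \sum_{j=1}^i x_j (v_j - \tilde v_j).
\]
The first term is $\gamma$-subgaussian by construction along the traversed path. The second term has $\ell_2$-norm at most $\varepsilon T$ for every realization of the signs, so its projection onto any unit direction is bounded by $\varepsilon T$ almost surely and contributes $O(\varepsilon T)$ to $\|\cdot\|_{\psi_2,\infty}$. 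Choosing $\varepsilon$ to be a small enough constant multiple of $(10-\gamma)/T$ and applying the triangle inequality for $\|\cdot\|_{\psi_2}$ yields the desired $10$-subgaussianity of each prefix.

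For the running time, with $\varepsilon = \Theta(1/T)$ the tree has $|E| \le 2(3/\varepsilon)^{nT} = T^{O(nT)}$ edges, and the internal blow-up by the factor $N = (2|E| \cdot C_\delta^n)^{1/\delta}$ inside Theorem~\ref{thm:TreeDistribution} preserves this form. A brute-force implementation that enumerates candidate sign vectors on the blown-up tree and verifies the subgaussianity requirement on each path (the verification reducing to finitely many checks via a further $\varepsilon$-net of $S^{n-1}$ and Lemma~\ref{lem:EpsNetSubgaussian}) runs in time $\exp(T^{O(nT)}) = \exp(T^{CnT})$, as claimed. There is no real conceptual obstacle beyond this: once Theorem~\ref{thm:TreeDistribution} is in hand, the reduction is essentially routine, the main thing to check being that the $\varepsilon T$ deterministic rounding error is absorbed by the slack $10 - \gamma$.
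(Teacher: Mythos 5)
Your proof follows essentially the same route as the paper's: discretize the adversary's possible moves via an $\varepsilon$-net, build a depth-$T$ tree over net points, apply Theorem~\ref{thm:TreeDistribution} to get a subgaussian distribution over tree signings, and then pre-sample and walk down the tree online, absorbing the $O(\varepsilon T)$ rounding error in the slack $10-\gamma$ via the $\|\cdot\|_{\psi_2,\infty}$ triangle inequality. One small point where you are actually more careful than the paper: you take $W$ to be an $\varepsilon$-net of the ball $B^n_2$ rather than of the sphere $S^{n-1}$, which is what is genuinely needed to round arbitrary $\|v_i\|_2 \le 1$ (a sphere net does not approximate vectors near the origin); the paper's statement of this step is slightly imprecise on exactly this point, and the volumetric size bound carries over to the ball as you observe. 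Your deterministic $\ell_2$-norm bound on the error term, followed by observing that an almost surely bounded variable has $\psi_2$-norm of the same order, is a clean alternative to the paper's term-by-term use of $\|x_j\|_{\psi_2} \le 2$; both yield the same $O(\varepsilon T)$ contribution.
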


\begin{proof}
  We write the constant from Theorem~\ref{thm:TreeDistribution} as $10-\delta$ for some $\delta>0$.
  Let $W$ be an $\eps$-net of $S^{n-1}$ of size $(3/\eps)^n$ where $\varepsilon := \frac{\delta}{10T}$. We consider a rooted tree $\pazocal{T} = (V,E)$ of depth $T$ so that all non-leaf nodes have $|W|$ children, labeled with each of the vectors in $W$. 
  By Theorem~\ref{thm:TreeDistribution}, there exists a distribution $\pazocal{D}$ over signs $x \in \{ -1,1\}^E$ that is $(10-\delta)$-subgaussian.
 
We now describe the randomized online algorithm. First, we sample random signs $x \sim \pazocal{D}$ and keep track of a position $p \in V$ in the tree, initially set to be the root. Now for each incoming vector $v_i \in \setR^n$, we output the sign corresponding to an edge $(p,p')$ labelled with $v_i'$ that satisfies $\|v_i - v'_i\|_2 \le \eps$ and set $p := p'$. It remains to argue that the resulting prefix sums are indeed $10$-subgaussian. Indeed, for any $i$ one has
\begin{align*}
\Big\|\sum_{j=1}^i x_j v_j\Big\|_{\psi_2,\infty} & \le \underbrace{\Big\|\sum_{j=1}^i x_j v'_j\Big\|_{\psi_2,\infty}}_{\le 10-\delta} + \Big\|\sum_{j=1}^i x_j (v_j - v_j')\Big\|_{\psi_2,\infty} \\ & \le 10-\delta + \sum_{j=1}^i \sup_{w \in S^{m-1}} \underbrace{| \langle  v_j - v'_j,w\rangle|}_{\le \eps} \cdot \underbrace{\|x_j\|_{\psi_2}}_{\le 2}  \\ & \le 10 - \delta + T \cdot 2\varepsilon < 10.
\end{align*}

Next, we discuss the running time of this procedure. The total number of edges in $\pazocal{T}$ equals $|E| = |W| + \dots + |W|^T \le (2|W|)^T \le O(T/\delta)^{nT}$.
Reinspecting the proof of Theorem~\ref{thm:TreeDistribution} we recall that
the $(10-\delta)$-subgaussian distribution $\pazocal{D}$ is constructed from a proper sign vector
for the tree $\pazocal{T}' = (V',E')$ that has  $|E'| = N\cdot |E|$ many edges where  $N := (2|E| \cdot C_\delta^n)^{1/\delta}$. The number of candidate sign vectors to be tried out
is then $2^{|E'|} \leq \exp((T/\delta)^{O(nT)})$, and we can verify the subgaussianity of each by computing the subgaussian norm of each root-vertex path sum over inner products with all vectors from $W$ in time $\exp(T^{CnT})$; then Proposition~\ref{lem:EpsNetSubgaussian} guarantees that if all such inner products are $(10-\delta)$-subgaussian, then the root-vertex path sums are also $\frac{10-\delta}{1-\eps} \le \frac{10-\delta}{1-\delta/10} = 10$-subgaussian. 
\end{proof}

\section{An online algorithm independent of the input length}

In the previous section we argued that for every number $T$ of vectors, there is an online algorithm that balances $T$
vectors. In this section, we prove that in fact, there has to be a \emph{single} online algorithm that balances
any sequence of vectors without knowing the number vectors beforehand.
Note that just using Theorem~\ref{thm:fixedTimeStepT} as a black box, the online algorithm
to balance $T$ vectors could be very different from the algorithm to balance $T'<T$
vectors. But of course one could have used the algorithm that worked for $T$ vectors also
to balance just $T'$ vectors. By using a compactness argument we will argue that there is indeed a single algorithm.

We fix some $\varepsilon > 0$ and set $\varepsilon_i := \varepsilon 2^{-i}$ for all $i \geq 1$.
Let $W_i \subseteq S^{n-1}$ be an $\varepsilon_i$-net. Let $\pazocal{T}_i = (V_i,E_i)$ be a tree of height $i$ with a distinguished root $r$ where for all $j \in \{ 1,\ldots,i\}$, each node at depth $j-1$ has $|W_j|$
many outgoing edges, one labelled with each vector from $W_j$. In other words, any root-leaf path in $\pazocal{T}_i$
corresponds to a sequence $(v_1,\ldots,v_i)$ with $v_j \in W_j$ for $j =1,\ldots,i$.
We write $\pazocal{T}^* = (V^*,E^*)$ as the infinite tree constructed in
the same manner, i.e. for all $j \geq 1$, nodes at distance $j-1$ to the root have $|W_j|$ children.
We say that a distribution $\pazocal{D}_i$ over signs $\Omega_i := \{ -1,1\}^{E_i}$ is \emph{$c$-subgaussian for $\pazocal{T}_i$} if $\|\sum_{P} x_ev_e\|_{\psi_2} \leq c$
for $x \sim \pazocal{D}_i$ and every path $P$ in $\pazocal{T}_i$ starting at the root.  First we prove that
the distributions for different height trees can be chosen in a consistent way:

\begin{lemma} \label{lem:ProjectedDistributions}
  There is a constant $\gamma < 10$ so that for every $\varepsilon > 0$
   there is a family of distributions $\{\pazocal{D}_i^*\}_{i \geq 1}$ where each $\pazocal{D}_i^*$ is a distribution over $\Omega_i$
  that is $\gamma$-subgaussian for $\pazocal{T}_i$ and $\pazocal{D}_i^* =\Pi_{\Omega_i}(\pazocal{D}_{i+1}^*)$ for all $i \geq 1$.
\end{lemma}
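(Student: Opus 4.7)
The plan is a compactness and diagonalization argument. For each $j \geq 1$, Theorem~\ref{thm:TreeDistribution} already yields a distribution $\pazocal{D}_j$ on $\Omega_j$ that is $\gamma$-subgaussian for $\pazocal{T}_j$, for some fixed $\gamma < 10$ independent of $j$. These distributions need not be mutually consistent under projection, so the task is to extract subsequential limits that automatically make them consistent while preserving subgaussianity.

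First I would observe that subgaussianity is preserved under projection onto a smaller tree. For $i \leq j$, set $\pazocal{D}_j^{(i)} := \Pi_{\Omega_i}(\pazocal{D}_j)$. Since every root-path in $\pazocal{T}_i$ is also a root-path in $\pazocal{T}_j$ using only edges in $E_i \subseteq E_j$, the partial sum $\sum_{e \in P} x_e v_e$ along such a path depends only on the signs in $E_i$; hence its distribution under $\pazocal{D}_j^{(i)}$ equals that under $\pazocal{D}_j$, and $\pazocal{D}_j^{(i)}$ is itself $\gamma$-subgaussian for $\pazocal{T}_i$.

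Next, because each $\Omega_i$ is a finite set, the space of probability distributions on $\Omega_i$ is a compact simplex in $\setR^{|\Omega_i|}$. I would perform a standard diagonal extraction: choose a subsequence $(j^{(1)}_k)_k$ along which $\pazocal{D}_{j^{(1)}_k}^{(1)}$ converges to some $\pazocal{D}_1^*$; inductively, given $(j^{(i)}_k)_k$, extract a further subsequence $(j^{(i+1)}_k)_k$ along which $\pazocal{D}_{j^{(i+1)}_k}^{(i+1)}$ converges to some $\pazocal{D}_{i+1}^*$. The diagonal $j_k := j^{(k)}_k$ then satisfies $\pazocal{D}_{j_k}^{(i)} \to \pazocal{D}_i^*$ as $k \to \infty$ for every fixed $i \geq 1$.

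It remains to verify the two requirements. Consistency $\pazocal{D}_i^* = \Pi_{\Omega_i}(\pazocal{D}_{i+1}^*)$ follows because $\Pi_{\Omega_i}$ is a continuous linear map and satisfies $\Pi_{\Omega_i} \circ \Pi_{\Omega_{i+1}} = \Pi_{\Omega_i}$, so
\[
\Pi_{\Omega_i}(\pazocal{D}_{i+1}^*) \;=\; \lim_{k \to \infty} \Pi_{\Omega_i}\bigl(\pazocal{D}_{j_k}^{(i+1)}\bigr) \;=\; \lim_{k \to \infty} \pazocal{D}_{j_k}^{(i)} \;=\; \pazocal{D}_i^*.
\]
For $\gamma$-subgaussianity of $\pazocal{D}_i^*$, fix any root-path $P$ in $\pazocal{T}_i$ and any $w \in S^{n-1}$: the map $\mu \mapsto \E_{x \sim \mu}\bigl[\exp\bigl(\langle w, \sum_{e \in P} x_e v_e\rangle^2/\gamma^2\bigr)\bigr]$ is linear, hence continuous, on the simplex of distributions on $\Omega_i$, so the condition ``$\leq 2$'' is closed and passes to the limit; intersecting over $w \in S^{n-1}$ yields $\|\sum_{e \in P} x_e v_e\|_{\psi_2,\infty} \leq \gamma$ under $\pazocal{D}_i^*$. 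The only real subtlety is precisely this closedness of the subgaussian condition under limits of distributions; everything else is routine bookkeeping.
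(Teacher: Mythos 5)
Your proof is correct and takes essentially the same approach as the paper's: both extract convergent subsequences from the fixed-horizon distributions given by Theorem~\ref{thm:TreeDistribution} using compactness of the finite-dimensional simplices, and both pass the subgaussianity and projection-consistency to the limit by continuity. The only cosmetic difference is that you take a diagonal sequence at the end while the paper keeps nested refinements explicit; the underlying argument is identical.
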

Here $\Pi_{\Omega_i}(\pazocal{D}_{i+1}^*)$ denotes the projection (or the marginals) of $\pazocal{D}_{i+1}^*$ on $\Omega_i$.
\begin{proof}
Consider the sequence of $\gamma$-subgaussian distributions $\pazocal{D}_i$ for $\pazocal{T}_i$ (which depend on $\varepsilon$) given by Theorem~\ref{thm:TreeDistribution} as points in the metric space $\Delta_i := \{x \in \setR_{\ge 0}^{\Omega_i} : \|x\|_1 = 1\}$. Since $\Delta_i$ is closed and bounded, it is compact. For every $i' \le i$, since $\pazocal{T}_{i'} \subseteq \pazocal{T}_i$, it follows that $\Pi_{\Omega_{i'}} (\pazocal{D}_i)$ is also $\gamma$-subgaussian.

We construct $\pazocal{D}^*_i$ inductively. First consider the infinite sequence of distributions $\Pi_{\Omega_1} (\pazocal{D}_i)$, all of which are $\gamma$-subgaussian for $\pazocal{T}_1$. Since $\Delta_i$ is compact, there exists a subsequence of indices $\{k_{j, 1}\}_{j \ge 1}$ so that $\pazocal{D}^*_1 := \lim_{j \to \infty} \Pi_{\Omega_1} (\pazocal{D}_{k_{j,1}})$ exists. By continuity, $\pazocal{D}^*_1$ is also $\gamma$-subgaussian over $\pazocal{T}_1$. 

Now assume that we have constructed distributions $\pazocal{D}^*_\ell$ for $1 \le \ell \le i$ that are $\gamma$-subgaussian for $\pazocal{T}_\ell$ with $\pazocal{D}_\ell^* = \Pi_{\Omega_\ell}(\pazocal{D}_{\ell+1}^*)$ for $1\le \ell < i$, as well as an infinite sequence of indices $\{k_{j,i}\}_{j \ge 1}$ which again satisfy $\pazocal{D}_\ell^* = \lim_{j \to \infty} \Pi_{\Omega_\ell}(\pazocal{D}_{k_{j,i}})$ for $\ell \le i$. We may drop a prefix if needed so that $k_{j,i} \ge i+1$ for all $j \ge 1$. By compactness of $\Delta_{i+1}$, there exists a subsequence of indices $\{k_{j,i+1}\}_{j \ge 1}$ so that $\pazocal{D}^*_{i+1} := \lim_{j \to \infty} \Pi_{\Omega_{i+1}} (\pazocal{D}_{k_{j,i+1}})$ exists; the previous limits remain the same. Again by continuity of the subgaussian norm, $\pazocal{D}^*_{i+1}$ is $\gamma$-subgaussian over $\pazocal{T}_{i+1}$, and also $\Pi_{\Omega_i} (\pazocal{D}^*_{i+1}) =\Pi_{\Omega_i} (\lim_{j \to \infty} \Pi_{\Omega_{i+1}} (\pazocal{D}_{k_{j,i+1}})) =  \lim_{j \to \infty} \Pi_{\Omega_i} (\pazocal{D}_{k_{j,i+1}}) = \pazocal{D}^*_i $.\end{proof}

\begin{theorem} There exists an online algorithm which, upon receiving a vector $v_i \in \setR^n$ with $\|v_i\|_2 \le 1$, outputs a random sign $x_i \in \{-1,1\}$ so that the prefix sum $\sum_{j=1}^i x_j v_j$ is $10$-subgaussian.
\end{theorem}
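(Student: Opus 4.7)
The plan is to combine Lemma~\ref{lem:ProjectedDistributions} with a Kolmogorov extension (or equivalently a projective limit over the finite sample spaces $\Omega_i$) to produce a single distribution $\pazocal{D}^*$ over signs on the infinite tree $\pazocal{T}^*$, from which the online algorithm samples once and then reads off signs as vectors arrive.

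First I would fix $\varepsilon > 0$ small enough that $\gamma + 2\varepsilon < 10$, where $\gamma < 10$ is the constant from Theorem~\ref{thm:TreeDistribution}. Applying Lemma~\ref{lem:ProjectedDistributions} with this $\varepsilon$ produces a consistent family $\{\pazocal{D}_i^*\}_{i \ge 1}$ of $\gamma$-subgaussian distributions on $\Omega_i = \{-1,1\}^{E_i}$ satisfying $\Pi_{\Omega_i}(\pazocal{D}_{i+1}^*) = \pazocal{D}_i^*$ for every $i$. Since each $\Omega_i$ is finite and the family is consistent, the Kolmogorov extension theorem yields a probability measure $\pazocal{D}^*$ on $\Omega^* = \{-1,1\}^{E^*}$ whose marginal on each $\Omega_i$ is $\pazocal{D}_i^*$.

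With $\pazocal{D}^*$ in hand, the online algorithm is the natural one. Sample $x \sim \pazocal{D}^*$ at the start (lazily, using conditional distributions, since only finitely many coordinates are ever consulted). Maintain a current node $p \in V^*$, initially the root. Upon receiving $v_i$, pick a child $p'$ of $p$ whose edge label $v_i' \in W_i$ satisfies $\|v_i - v_i'\|_2 \le \varepsilon_i = \varepsilon \cdot 2^{-i}$, output $x_i := x_{(p,p')}$, and update $p \leftarrow p'$. To verify $10$-subgaussianity of every prefix, I would invoke the triangle inequality for $\|\cdot\|_{\psi_2,\infty}$:
\[
\Big\|\sum_{j=1}^i x_j v_j\Big\|_{\psi_2,\infty} \le \Big\|\sum_{j=1}^i x_j v_j'\Big\|_{\psi_2,\infty} + \Big\|\sum_{j=1}^i x_j(v_j - v_j')\Big\|_{\psi_2,\infty}.
\]
The first term is at most $\gamma$ because the signs on the root-to-$p$ path of $\pazocal{T}_i$ are distributed as a marginal of $\pazocal{D}_i^*$, which is $\gamma$-subgaussian for $\pazocal{T}_i$. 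The second term is bounded by $\sum_{j=1}^i \|v_j - v_j'\|_2 \cdot \|x_j\|_{\psi_2} \le \sum_{j \ge 1} \varepsilon 2^{-j} \cdot 2 \le 2\varepsilon$, so the prefix is $(\gamma + 2\varepsilon)$-subgaussian, which is strictly less than $10$ by the choice of $\varepsilon$.

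The main obstacle is essentially already discharged by Lemma~\ref{lem:ProjectedDistributions}, which required a compactness-plus-diagonal-subsequence argument to synchronize the $\gamma$-subgaussian distributions across all depths. Given that lemma, the present theorem is a packaging step: the Kolmogorov extension from consistent finite-dimensional marginals to a measure on the countable product $\{-1,1\}^{E^*}$ is standard, and the net-approximation error telescopes as a geometric series in $\varepsilon_i$ that can be made arbitrarily small independently of the length of the input sequence.
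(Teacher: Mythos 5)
Your proposal is correct and follows essentially the same route as the paper: both rely on Lemma~\ref{lem:ProjectedDistributions} to obtain the consistent family $\{\pazocal{D}_i^*\}_{i\ge 1}$ and then bound the $\varepsilon$-net approximation error by the geometric series $\sum_j \varepsilon_j \|x_j\|_{\psi_2} \le 2\varepsilon$. The paper phrases the sampling as drawing each $x_i$ from $\pazocal{D}_i^*$ conditioned on $x_1,\ldots,x_{i-1}$ and observes this is equivalent to drawing all of $(x_1,\ldots,x_i)$ from $\pazocal{D}_i^*$, which is exactly the lazy realization of the Kolmogorov-extension measure you describe, so the two presentations coincide.
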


\begin{proof}
  Fix a small enough $\delta > 0$ so that Lemma~\ref{lem:ProjectedDistributions} works with
  constant $10-\delta$. Let $\pazocal{D}^*_i$ be $(10-\delta)$-subgaussian distributions provided by Lemma~\ref{lem:ProjectedDistributions} where we choose $\eps := \frac{\delta}{4}$. We keep track of a position $p \in V$ in the tree, initially set to be the root. For each  incoming vector $v_i \in \setR^n$, we sample a random sign $x_i$ from $\pazocal{D}^*_i$ \emph{conditioned} on the previous signs $x_1, \dots, x_{i-1}$ on edges already visited, corresponding to an edge $(p,p')$ labeled with a vector $v'_i$ that satisfies $\|v_i - v'_i\|_2 \le \eps_i$, and set $p := p'$. It remains to argue that the resulting prefix sums are $10$-subgaussian. Indeed, since $\pazocal{D}^*_j = \Pi_{\Omega_j}(\pazocal{D}^*_i)$ for all $j < i$, the distribution of the signs $(x_1, \dots, x_i)$ is equivalent to drawing all of them from $\pazocal{D}^*_i$. Then, as in Theorem~\ref{thm:fixedTimeStepT},
\begin{align*}
\Big\|\sum_{j=1}^i x_j v_j\Big\|_{\psi_2,\infty} & \le \underbrace{\Big\|\sum_{j=1}^i x_j v'_j\Big\|_{\psi_2,\infty}}_{\le 10-\delta} + \Big\|\sum_{j=1}^i x_j (v_j - v_j')\Big\|_{\psi_2,\infty} \\ & \le 10-\delta + \sum_{j=1}^i \sup_{w \in S^{m-1}} \underbrace{| \langle v_j - v'_j,w\rangle|}_{\le \eps_j} \cdot \underbrace{\|x_j\|_{\psi_2}}_{\le 2}  \\ & \le 10- \delta + 2\eps < 10. \qedhere
\end{align*}
\end{proof}

\section{Applications}

In this section, we show a few direct consequences of our main Theorem~\ref{thm:MainResultI}.

\begin{proof}[Proof of Theorem~\ref{thm:MainResultBanaszczyk}]

Both items will follow from Lemma~\ref{lem:TalagrandComparison}. For the first item, note that by Lemmas 26 and 27 in~\cite{DBLP:conf/approx/DadushGLN16}, it follows that any symmetric convex body $K$ with $\gamma_n(K) \ge 1/2$ satisfies $E_{g \sim N(\bm{0},I_n)} [\|g\|_K] \lesssim 1$ and $\mathrm{inradius}(K) \gtrsim 1$, so it suffices to apply Lemma~\ref{lem:TalagrandComparison} with $\delta:= \frac{1}{2}$. 

For the second item, any symmetric convex body $K$ with $\gamma_n(K) \ge 1 - \frac{1}{2T}$ once again satisfies $\E_{g \sim N(\bm{0},I_n)} [\|g\|_K] \lesssim 1$, and moreover $K$ must contain a $\Omega(\sqrt{\log T})$-radius ball, for otherwise it would be contained in a strip of Gaussian measure less than $1 - \frac{1}{2T}$. Therefore $\max_{x \in K^\circ} \|x\|_2 \lesssim 1/\sqrt{\log T}$ and we may apply Lemma~\ref{lem:TalagrandComparison} with $\delta := \frac{1}{2T}$ together with the union bound over the $T$ prefixes.
\end{proof}

\begin{proof}[Proof of Corollary~\ref{thm:MainResultLp}]

Let $d \le \min(n,T)$ denote the dimension of the linear span $U$ of $v_1, \dots, v_T$. By Corollary 19 in~\cite{BGMN2005}, \[\E_{g \sim N(\bm{0}, I_n)}[\|g\|_{U \cap B^n_p}] \le \E_{g \sim N(\bm{0}, I_d)}[\|g\|_{B^d_p}] \stackrel{\textrm{Jensen}}{\le} \E_{g \sim N(\bm{0}, I_d)}[\|g\|_{B^d_p}^p]^{1/p} \le \sqrt{p} \cdot d^{1/p}. \] Here we also used that $\E_{g \sim N(0,1)}[g^p] \le p^{p/2}$. By Lemma~\ref{lem:TalagrandComparison}, \begin{align*} \|X\|_p = \|X\|_{U \cap B^n_p} & \lesssim \E_{g \sim N(\bm{0}, I_n)}[\|g\|_{U \cap B^n_p}] + \sqrt{\log(1/\delta)} \cdot \underbrace{1/\mathrm{inradius}(B^n_p)}_{=1} \\ & \le \sqrt{p} \min(n,T)^{1/p} + \sqrt{\log(1/\delta)} \end{align*}
with probability at least $1-\delta$, as claimed. The second item follows from a union bound over $T$ such events and that $\sqrt{\log(T/\delta)} \le \sqrt{\log T} + \sqrt{\log(1/\delta)}$. 

For the $\ell_\infty$ bounds, we instead apply Theorem 9 in~\cite{BGMN2005} which together with Lemma 26 in~\cite{DBLP:conf/approx/DadushGLN16} gives $\E_{g \sim N(\bm{0}, I_n)}[\|g\|_{U \cap B^n_\infty}] \lesssim \sqrt{\log \min(n,T)}$, so that Lemma~\ref{lem:TalagrandComparison} analogously yields both claims.
\end{proof}

\begin{proof}[Proof of Corollary~\ref{cor:edgeOrientation}]
It suffices to construct a vector $v_e := e_u - e_v \in \setR^{|V|}$ for every edge $e = \{u,v\}$; then signs correspond to an orientation. Since such vectors have constant $\ell_2$ norm, the claim follows directly from Corollary~\ref{thm:MainResultLp}(d).
\end{proof}

\section{Online discrepancy lower bound}

We show that the bound in Corollary~\ref{thm:MainResultLp}(d) is tight up to a constant:

\begin{proof} [Proof of Theorem~\ref{thm:OnlineDiscLB}]

We may also assume that $n = 2$ as otherwise we may pad the construction with zeros. By Yao's minimax principle~\cite{10.1109/SFCS.1977.24}, it suffices to show a strategy for the adversary against deterministic online algorithms. 

Split the time horizon into $T/k$ blocks of length $k := \lceil \tfrac{1}{2} \log T\rceil$. In each block, the adversary samples signs $y \sim \{-1,1\}^k$ uniformly at random, and outputs unit vectors $v_1, \dots, v_k \in \setR^2$ so that for each $i \in [k]$, $v_i$ is orthogonal to $\sum_{j < i} y_j v_j$. The online algorithm then outputs deterministic signs $x \in \{-1,1\}^k$ where $x_i$ only depends on $y_j, v_j$ for $j < i$. Since the sign $y_i$ is independent of the sign $x_i$ given by the online algorithm, the sign vectors $x$ and $y$ will be equal with probability $2^{-k}$. In this case, $\|\sum_{i=1}^k x_i v_i\|_2 = \sqrt{k}$. The probability that this happens in \emph{at least one} block is $1 - (1-2^{-k})^{T/k} = 1 - 2^{-T^{\Omega(1)}}$. Then one of the two prefixes induced by that block will have $\ell_2$ norm at least $\frac{1}{2} \sqrt{k}$ and $\ell_\infty$ norm at least $\frac{1}{\sqrt{2}} \cdot \frac{1}{2} \sqrt{k} \gtrsim \sqrt{\log T}$.
\end{proof}

\section{Open problems}

We mention two other settings for which the optimal discrepancy bound against oblivious adversaries remains open:

\begin{conjecture} Does there exist an online algorithm that for any sequence of vectors $v_1, \dots, v_n \in \setR^n$ with $\|v_i\|_\infty \le 1$, arriving one at a time, decides random signs $x_1, \dots, x_n \in \{-1,1\}$ so that $\|\sum_{i=1}^n x_i v_i\|_\infty \le O(\sqrt{n})$ with high probability?
\end{conjecture}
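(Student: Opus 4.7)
The plan is to instantiate the tree-cloning framework of Section~4 with a target body adapted to the $\ell_\infty$ norm rather than the subgaussian norm. Given vectors with $\|v_i\|_\infty \le 1$, I would first rescale to $\tilde v_i := v_i/\sqrt n$ so that $\|\tilde v_i\|_2 \le 1$, build the discretized adversary tree $\pazocal{T}=(V,E)$ of depth $n$ via an $\varepsilon$-net as in Theorem~\ref{thm:fixedTimeStepT}, and form the $N$-fold clone $\pazocal{T}'$ in $\setR^{Nn}$. A uniform clone index $\ell\sim[N]$ would then convert any signing on $\pazocal{T}'$ into a distribution over signings on $\pazocal{T}$, as in Theorem~\ref{thm:TreeDistribution}.

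For the body, the natural attempt is to replace the subgaussian condition in~\eqref{eq:DefK} by a high-moment $\ell_\infty$ condition such as
\[
  K := \Bigl\{(y^{(1)},\dots,y^{(N)}) \in \setR^{Nn} \;:\; \max_{j \in [n]} \E_{\ell \sim [N]}\!\bigl[\langle e_j, y^{(\ell)} \rangle^{2k}\bigr] \le C^{2k}\Bigr\}
\]
for a large even moment $k$ and constant $C$. Membership of the path sums in $\alpha K$ together with Markov's inequality over $n$ coordinates would yield $\|Y\|_\infty \le O(1)$ for the random-clone vector $Y$, and hence $\ell_\infty$ prefix discrepancy $O(\sqrt n)$ after unscaling by $\sqrt n$. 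Applying Theorem~\ref{thm:TreeVectorBalancing} to $\pazocal{T}'$ and extracting the online algorithm via the compactness argument of Section~6 would then close the loop --- \emph{provided} we can show $\gamma_{Nn}(K) \ge 1 - 1/(2|E'|)$.

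The hard part, and the reason the conjecture remains open, is precisely this Gaussian measure bound. For $y^{(\ell)} \sim N(\bm{0}, I_n)$ one has $\E[\langle e_j, y^{(\ell)}\rangle^{2k}] = (2k-1)!! \approx (2k/e)^k$, so even with perfect concentration of $\frac{1}{N}\sum_\ell \langle e_j, y^{(\ell)}\rangle^{2k}$ around its mean, the constraint $\le C^{2k}$ already forces $C \gtrsim \sqrt k$, and the subsequent union bound over the $2^{\mathrm{poly}(n)}$ paths in $\pazocal{T}'$ pushes $k$ up to $\mathrm{poly}(n)$, reintroducing a $\sqrt{\log n}$ loss. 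This is a concrete manifestation of the Banaszczyk barrier: no symmetric convex body whose Gaussian measure is close to $1$ can be contained in $C \cdot B_\infty^n$ for $C = o(\sqrt{\log n})$, so any argument that locates the target in Gaussian space (as both Banaszczyk's theorem and this paper's framework do) is forced to lose a factor of $\sqrt{\log n}$ relative to the conjectured $O(\sqrt n)$. Resolving the conjecture therefore seems to require either (a) grafting Spencer's partial-coloring / entropy method onto the tree-balancing framework, or (b) developing a variant of Banaszczyk's $K \ast u$ construction that is sensitive to the $\ell_\infty$ boundedness of the inputs rather than only their $\ell_2$ norms; both appear to be substantial obstacles in their own right.
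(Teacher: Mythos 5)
You have correctly recognized that this statement is a conjecture: the paper poses it as an open problem and does not claim a proof, noting only that Corollary~\ref{thm:MainResultLp}(c) yields the weaker bound $O(\sqrt{n \log n})$ after rescaling $\|v_i\|_\infty \le 1$ to $\|v_i/\sqrt n\|_2 \le 1$, exactly as you observe. Since there is no proof in the paper to compare against, the relevant question is whether your barrier analysis is sound, and it is. The Gaussian-measure framework underlying Theorems~\ref{thm:BanaszczykStretchedBodyLB}, \ref{thm:TreeVectorBalancing} and~\ref{thm:TreeDistribution} requires a target body $K$ with $\gamma(K)$ close to $1$, and any symmetric convex $K \subseteq C \cdot B_\infty^n$ has
\[
  \gamma_n(K) \le \gamma_n(C B_\infty^n) = \Pr_{g\sim N(\bm 0,I_n)}\bigl[\max_i |g_i| \le C\bigr] \to 0
\]
whenever $C = o(\sqrt{\log n})$. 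So the $\sqrt{\log n}$ loss you identify is indeed inherent to any argument that locates the prefix sums in a single Gaussian-large convex body in $\setR^n$ (or, via cloning, in $\setR^{Nn}$ with coordinate-projected membership), and your specific moment calculation --- $\E[g^{2k}]^{1/(2k)} = \Theta(\sqrt k)$, so a $2k$-th moment constraint already forces $C \gtrsim \sqrt k$, and the union bound over $\exp(\mathrm{poly}(n))$ net directions and tree paths forces $k = \Omega(\log n)$ --- makes this concrete in the language of the paper's own Lemma~\ref{lem:SinglewBound}. Your proposed escape routes (importing Spencer-type partial coloring, which achieves $O(\sqrt n)$ offline precisely by sidestepping the Gaussian-measure route, or designing an $\ell_\infty$-sensitive analogue of the $K * u$ operation) are the natural candidates and match the paper's framing of this as genuinely open. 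In short, this is not a gap in your argument but an accurate diagnosis of why the conjecture is not a corollary of the paper's methods.
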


Corollary~\ref{thm:MainResultLp}(c) gives a bound of $O(\sqrt{n \log n})$. Bansal and Spencer have settled the case where the vectors are chosen uniformly at random from $\{-1,1\}^n$~\cite{BansalSpencer2020}. Note that there is a $\Omega(\sqrt{n})$ lower bound even in the offline setting and a $\Omega(\sqrt{\log T})$ lower bound from Theorem~\ref{thm:OnlineDiscLB} (thus we restrict to $T = n$).

\begin{conjecture}
Does there exist an online algorithm that for any sequence of vectors $v_1, \dots, v_T \in \setR^n$, each with two nonzero coordinates (one equal to 1 and the other -1) and arriving one at a time, decides random signs $x_1, \dots, x_T \in \{-1,1\}$ so that $\|\sum_{i=1}^t x_i v_i\|_\infty \le O(\sqrt[3]{\log T})$ for all $t \in [T]$ with high probability?
\end{conjecture}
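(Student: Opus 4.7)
The plan is to adapt the paper's framework by replacing the generic subgaussian body in equation~\eqref{eq:DefK} with one that exploits the sparse structure of edge vectors, thereby lifting to the online setting the offline prefix-discrepancy techniques that go through Banaszczyk's Theorem~\ref{thm:BanaszczykStretchedBodyLB}; the paradigm I would try to imitate is the body-engineering of Bansal, Rohwedder and Svensson~\cite{FlowTimeAndPrefixBeckFialaSTOC2022} for sparse vectors.

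First, I would reduce the online edge-orientation problem to tree balancing exactly as in Theorem~\ref{thm:TreeDistribution}. Since there are only finitely many distinct signed edge vectors $\pm(e_u - e_v)$ in $\setR^n$, no $\varepsilon$-net discretization is needed: the rooted tree $\pazocal{T} = (V,E)$ has depth $T$ and constant-bounded out-degree $2\binom{n}{2}$ at every internal node, labeled by all signed edge vectors. I would then apply the cloning trick of Section~5 to turn a single sign vector into a distribution.

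Second, I would replace the subgaussian body \eqref{eq:DefK} by a block analogue of a convex body $K_{\mathrm{edge}} \subseteq \setR^n$ tailored to edges. Concretely, $K_{\mathrm{edge}}$ should have large Gaussian measure (ideally $\gamma_n(K_{\mathrm{edge}}) \geq 1 - 1/\mathrm{poly}(T)$), $\ell_\infty$-inradius of order $(\log T)^{1/3}$, and be \emph{stretch-stable} under Banaszczyk's operation $K_{\mathrm{edge}} \mapsto K_{\mathrm{edge}} * \beta (e_u - e_v)$ for every pair $u,v$. A natural candidate is an intersection of an $\ell_\infty$-ball of radius $O((\log T)^{1/3})$ with an ellipsoid whose semi-axes are chosen in the style of~\cite{FlowTimeAndPrefixBeckFialaSTOC2022}. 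Promoting $K_{\mathrm{edge}}$ to the block body
\[
K := \bigl\{ (y^{(1)}, \ldots, y^{(N)}) \in \setR^{Nn} \;\big|\; y^{(\ell)} \in K_{\mathrm{edge}} \textrm{ for all but a } \tfrac{1}{2T}\textrm{-fraction of } \ell \bigr\}
\]
would feed into Theorem~\ref{thm:TreeVectorBalancing} on the cloned tree, and the compactness argument of Section~6 would then handle the infinite horizon.

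The main obstacle, I expect, is that $K_{\mathrm{edge}}$ must remain stretch-stable \emph{simultaneously} in all $O(n^2)$ edge directions, since branching in the tree forces many different Banaszczyk stretches to be applied at each interior node. The offline BRS argument only has to deal with a linear sequence of stretches along a single path, which gives much more freedom in re-shaping the body after each step; the tree setting requires a body that is near-invariant under an entire family of such operations. A second subtlety is that the Gaussian-measure bound analogous to Proposition~\ref{prop:GaussianMeasureOfK} must control $\ell_\infty$ excess rather than a single direction's subgaussian norm, so one would need a heavy-tail Rosenthal-type estimate in the spirit of Lemma~\ref{lem:RosenthalInequality} but adapted to the edge geometry rather than the single-direction bound of Lemma~\ref{lem:SinglewBound}. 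Overcoming either of these two points is where I would expect the difficulty of the conjecture to sit, and it is plausibly why the problem remains open.
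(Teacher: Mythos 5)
The statement you are addressing is a \emph{conjecture} that the paper poses as an open problem; the paper offers no proof, only the context that Corollary~\ref{cor:edgeOrientation} gives the $O(\sqrt{\log T})$ upper bound and that~\cite{AJTAI1998306,fiat2017carpooling} give the $\Omega(\sqrt[3]{\log T})$ lower bound. You correctly treat it as open and offer a speculative roadmap, so there is nothing in the paper to ``compare against.'' That said, your sketch contains a concrete obstruction worth flagging.

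The candidate body you propose --- an intersection of an $\ell_\infty$-ball of radius $O((\log T)^{1/3})$ with an ellipsoid, required to have Gaussian measure $\gamma_n(K_{\mathrm{edge}}) \ge 1 - 1/\mathrm{poly}(T)$ --- cannot exist. Any body contained in the slab $\{|x_1| \le r\}$ has Gaussian measure at most $\Pr_{g\sim N(0,1)}[|g|\le r]$, which is below $1 - \frac{1}{T}$ unless $r = \Omega(\sqrt{\log T})$. So an $\ell_\infty$-inradius of $(\log T)^{1/3}$ is incompatible with the Gaussian measure threshold that Theorem~\ref{thm:TreeVectorBalancing} (and the prefix/tree version of Banaszczyk more generally) demands. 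This is the fundamental reason the Banaszczyk machinery plateaus at $\sqrt{\log T}$ in $\ell_\infty$: the very hypothesis $\gamma_n(K) \ge 1 - \Theta(1/|E|)$ already forces the body to be $\Omega(\sqrt{\log |E|})$ wide in every coordinate direction. The stretch-stability issue you raise is real but secondary; even with a stretch-stable body in hand, the measure requirement defeats the plan before any tree argument starts. Beating $\sqrt{\log T}$ for edge orientation would therefore need a genuinely different mechanism --- e.g.~exploiting the integer/parity structure of edge vectors or a potential-function argument that does not pass through Gaussian measure --- rather than a re-engineered body inside the existing framework, which is presumably why the problem remains open.
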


Corollary~\ref{cor:edgeOrientation} gives an upper bound of $O(\sqrt{\log T})$ and there is a $\Omega(\sqrt[3]{\log T})$ lower bound~\cite{AJTAI1998306}~\cite{fiat2017carpooling}.

Finally, we ask for a polynomial time algorithm for Theorem~\ref{thm:MainResultI}.

\begin{conjecture}
Does there exist a polynomial time online algorithm that against any oblivious adversary, for any sequence of vectors $v_1,\ldots,v_T \in \setR^n$ with $\|v_i\|_2 \leq 1$, decides random signs $x_1,\ldots,x_T \in \{ -1,1\}$ so that for every $t \in [T]$, the prefix sum $\sum_{i=1}^t x_iv_i$ is $O(1)$-subgaussian?
\end{conjecture}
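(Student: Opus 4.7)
The plan is to constructivize the existence proof of Theorem~\ref{thm:MainResultI} in polynomial time. That proof is non-constructive in three places: Banaszczyk's stretching Theorem~\ref{thm:BanaszczykStretchedBodyLB} underlying the tree balancing Theorem~\ref{thm:TreeVectorBalancing}; the exponentially large cloned tree of Theorem~\ref{thm:TreeDistribution}; and the compactness argument of Lemma~\ref{lem:ProjectedDistributions}. The last is easiest to bypass via a doubling trick on $T$, so one only needs to handle a finite known horizon and then restart.

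The real bottleneck is the first. A natural attempt is to replace Banaszczyk's existential stretching by an online variant of the Gram-Schmidt walk of Bansal, Dadush, Garg and Lovett~\cite{GramSchmidtWalk-BansalDGL-STOC18}, which algorithmizes Banaszczyk's offline theorem. The difficulty is that the Gram-Schmidt walk's update rule requires orthogonal projection onto the span of the \emph{remaining} vectors, which is invisible online. One might try to use the $\varepsilon$-net of Lemma~\ref{lem:SizeOfEpsilonNet} to pessimistically treat all of $\setR^n$ as the span of future arrivals, but this essentially collapses the walk into the self-balancing random walk of Alweiss, Liu and Sawhney, which loses the $\sqrt{\log(nT)}$ factor that the conjecture asks us to remove.

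A more promising route is a boosting scheme: run Alweiss--Liu--Sawhney as a base layer to maintain $O(\sqrt{\log(nT)})$-subgaussian prefixes, then periodically apply a polynomially-sized correction batch whose signs are chosen by an algorithmic offline Banaszczyk routine. Since online signs are irrevocable, the correction would have to be encoded either as a short look-ahead window or as auxiliary orthogonal cancellation vectors. Alternatively, one can try to algorithmize Theorem~\ref{thm:TreeDistribution} directly by replacing the exponential tree with a polynomial-size sketch of the distribution (a mixture over a polynomial set of candidate walks) and sampling the next online sign from its marginal, borrowing ideas from SDP-rounding implementations of prefix discrepancy.

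The main obstacle is the algorithmic Banaszczyk barrier: a polynomial-time online $O(1)$-subgaussian algorithm would in particular imply a polynomial-time algorithmic version of Banaszczyk's offline prefix Theorem~\ref{thm:BanaszczykPrefixBalancing}, for which no efficient algorithm is currently known for a general convex body $K$ with $\gamma_n(K) \geq 1 - 1/(2T)$. Resolving the conjecture will therefore likely require either an algorithmic analogue of the stretching operation $K \mapsto K * u$ for the specific bodies appearing in~\eqref{eq:DefK}, or a fundamentally different approach that bypasses stretching altogether.
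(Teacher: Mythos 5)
This statement is not a theorem in the paper; it is stated as an open problem (a conjecture) in the ``Open problems'' section, and the paper does not prove it. There is therefore no proof in the paper to compare your write-up against. Your submission is likewise not a proof: it is a survey of candidate approaches and obstacles. So neither side claims to have resolved the question, and in that sense you have correctly treated it as open.

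As a discussion, your analysis of where the existence proof is non-constructive is accurate: the compactness argument of Lemma~\ref{lem:ProjectedDistributions} can be sidestepped by standard doubling, the cloned tree of Theorem~\ref{thm:TreeDistribution} is exponentially large, and the hardest piece is an algorithmic analogue of Banaszczyk's stretching (Theorem~\ref{thm:BanaszczykStretchedBodyLB}) in the online setting. Your point that an online Gram--Schmidt walk needs the span of future vectors, and that naively replacing that span with all of $\setR^n$ degrades to the Alweiss--Liu--Sawhney bound, is a reasonable heuristic explanation of the barrier. Two caveats worth flagging. First, the claim that a polynomial-time online $O(1)$-subgaussian algorithm ``would in particular imply'' an efficient algorithmic version of Theorem~\ref{thm:BanaszczykPrefixBalancing} is morally true via Theorem~\ref{thm:MainResultBanaszczyk}(b) and repeated sampling, but it only gives a randomized algorithm producing a sign vector in $O(1)\cdot K$, not a deterministic one, and with a worse constant; you should not state the implication as exact. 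Second, the ``boosting'' and ``polynomial sketch of $\pazocal{D}$'' routes are speculative and you give no argument that the resulting marginals remain $O(1)$-subgaussian after truncating the cloning parameter $N$ from $(2|E| C_\delta^n)^{1/\delta}$ down to polynomial size; the tail bound of Lemma~\ref{lem:SinglewBound} genuinely requires $N$ to be large relative to $|E|$, so this is not a minor detail. In short: your identification of the obstacles matches the spirit of why the paper leaves this open, but nothing in your proposal constitutes a proof, and you should be careful not to present the listed ``promising routes'' as if any of them were known to work.
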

\bibliographystyle{alpha}
\bibliography{onlineVectorBalancing}

\end{document}